\newcommand{\tuple}[1]{\langle #1 \rangle}
\newcommand{\nat}{\mathbb{N}}
\newcommand{\ceil}[1]{\left \lceil #1 \right \rceil }
\newtheorem{theorem}{Theorem}
\newtheorem{lemma}[theorem]{Lemma}
\newtheorem{definition}{Definition}
\theoremstyle{plain}
\newtheorem{obs}{Observation}
\newtheorem{example}{Example}
\newtheorem{remark}{Remark}
\newtheorem{conjecture}{Conjecture}
\newtheorem{question}{Question}
\title{On vertex coloring without monochromatic triangles}
\author{
Micha\l~Karpi\'nski\thanks{karp@cs.uni.wroc.pl}, Krzysztof Piecuch\thanks{kpiecuch@cs.uni.wroc.pl}\\[2mm]
Institute of Computer Science \\
University of Wroc\l aw \\
Joliot-Curie 15, 50-383 Wroc\l aw, Poland \\
}
\begin{document}

    \maketitle

    \begin{abstract}
      We study a certain relaxation of the classic vertex coloring problem,
      namely, a coloring of vertices of undirected, simple graphs,
      such that there are no monochromatic triangles. We give the first classification
      of the problem in terms of classic and parametrized algorithms. Several
      computational complexity results are also presented, which improve on the previous results found in the literature.
      We propose the new structural parameter for undirected, simple graphs -- the triangle-free chromatic number $\chi_3$.
      We bound $\chi_3$ by other known structural parameters. We also present two classes of graphs with interesting
      coloring properties, that play pivotal role in proving useful observation about our problem.
      We give/ask several conjectures/questions throughout this paper
      to encourage new research in the area of graph coloring.
    \end{abstract}

  \section{Introduction}

  Graph coloring is probably the most popular subject in graph theory.
  It is an interesting topic from both algorithmic and combinatoric points of view.
  The coloring problems have many practical applications in areas such as operations research,
  scheduling and computational biology. For a recent survey one can turn to \cite{survey}.
  In this paper we study a variation of the classic coloring -- we call it the {\em triangle-free coloring}
  problem, in which we ask for an assignment of colors to the vertices of a given graph,
  such that the number of colors used is minimum and that each cycle of length 3 has at least two vertices
  colored differently. We show that our problem has interesting graph-theoretical properties and we also present some
  evidence that this problem might be easier than the classic vertex coloring. This suggests that
  studying our variation, new results can be achieved in the field of classic vertex coloring,
  which is known to be one of the hardest known optimization problems. Apart from theoretical
  motivation, there is also a practical one -- vertex coloring without monochromatic cycles
  can be used in the study of consumption behavior \cite{motivation}.

  \subsection{Related work}

  Some researchers have already considered coloring problems that are similar to our variation.
  The class of planar graphs has been of particular interest, for example, Angelini and Frati \cite{angelini}
  study planar graphs that admit an acyclic 3-coloring -- a proper coloring in which 
  every 2-chromatic subgraph is acyclic. Algorithms for acyclic coloring
  can be used to solve/approximate a triangle-free coloring, although we
  do not explore this possibility in this paper.
  Another result is of Kaiser and \v{S}krekovski \cite{kaiser}, where
  they prove that every planar graph has a 2-coloring such that no cycle of length 3 or 4 is monochromatic.
  Thomassen \cite{2listplanar}, on the other hand, considers list-coloring of planar graphs without monochromatic triangles.
  Few hardness results for our problem are known -- Karpi\'nski \cite{karpinski} showed that verifying whether
  a graph admits a 2-coloring without monochromatic cycles of fixed length is $\mathcal{NP}$-complete.
  His proof was then simplified by Shitov \cite{shitov}, who also proposed and proved the hardness of
  an extension of our problem, where additional restriction is imposed on the coloring in the form of
  the set of {\em polar} edges -- edges that must not be monochromatic in the resulting coloring.
   
  \subsection{Our contribution}

  Several novel results are presented in this paper.
  First, we explore the graph-theoretical side of our problem. We propose the new
  structural parameter $\chi_3(G)$ which is the minimum number of colors needed
  to label the vertices of an undirected, simple graph $G$, such that there are no
  monochromatic triangles. We then bound this new parameter by $\omega(G)$, $\chi(G)$ and
  $\Delta(G)$, which are clique number, chromatic number and the largest vertex degree of $G$, respectively.
  We conjecture that $\chi_3(G) \leq \lceil \omega(G)/2 \rceil + 1$ and we construct an infinite class
  of graphs for which this upper bound is met. In our construction we use what we call {\em cycle-cliques}
  as building blocks. Those gadgets have interesting coloring properties and are also used in proving hardness results
  later in the paper.

  For the positive side, several known graph classes are presented, for which our problem can be solved efficiently, for
  example, we can find $\chi_3$ on planar graphs in polynomial time, whereas finding $\chi$ is $\mathcal{NP}$-complete,
  even on planar graphs with maximum degree 4 \cite{4regplanar}. We use the fact that when $\chi$ is small (less than 5),
  then we can reduce our problem to the problem of deciding if a given graph is triangle-free. In general, the time needed for listing
  all triangles of a graph is $O(m \sqrt{m})$ \cite{triangle-listing}, but in the presented graphs,
  we can find if a graph is triangle-free in time $O(n)$. We also prove that our problem is fixed-parameter
  tractable, when the parameter is the vertex cover number.

  We present several hardness results, which improve on the work of Karpi\'nski \cite{karpinski} and Shitov \cite{shitov}.
  We show that given any fixed number $q \geq 2$, determining if graph is triangle-free
  $q$-colorable is $\mathcal{NP}$-hard. This improves on the result given in \cite{karpinski}, where the author shows hardness
  only for $q = 2$. Another improvement to Karpi\'nski's result \cite{karpinski} is $\mathcal{NP}$-hardness proof of triangle-free
  2-coloring problem for the graphs that does not contain clique of size 4 as a subgraph. In \cite{shitov}, author
  formulates and proves the $\mathcal{NP}$-hardness of triangle-free 2-coloring problem where additional set of polar
  edges -- edges that must not be monochromatic -- are given on input. We show that this variation of our problem remains
  $\mathcal{NP}$-hard, even on graphs with maximum degree 3 -- the sub-cubic graphs.

  Throughout the paper we ask many questions and propose conjectures,
  which we hope will spark an interest in this new variant of vertex coloring.
  
  \subsection{Structure of the paper}

  In Section 2 we give definitions and notations
  used throughout the paper, as well as formulation of all considered problems.
  We give several bounds on $\chi_3$ in Section 3 and we construct
  an infinite class of graphs for which one of the conjectured bounds is tight. In Section 4 we present
  efficient algorithms for the triangle-free coloring problem for certain classes of graphs. A single FPT result
  are also shown there. Hardness results are presented in Section 5. We finish the paper with some concluding remarks in
  Section 6, where we also present the reader several open problems.
  
  \section{Preliminaries}

  We recall several properties and notions from graph theory that are of interest to us. Let $G=(V,E)$ be a finite,
  undirected, unweighted, simple graph with the vertex set $V$ and the edge set $E$. Set of vertices
  adjacent to some vertex $v \in V$ is called the neighborhood of $v$ and is denoted by $N_G(v)$. The degree
  of a vertex $v$ is defined as $d_{G}(v)=|N_G(v)|$ (we omit index $G$ in the
  notation if it's clear from the context which graph is considered). The degree of $G$ is
  $\Delta(G) = \max_{v \in V} d(v)$. We write that $H$ is a subgraph of $G$ as $H \subseteq G$, and
  $H$ is isomorphic to $G$ as $H \sim G$. We say that graph $G$ is $H$-free if there is no $H' \subseteq G$ such that $H' \sim H$.
  
  An elementary contraction of graph $G$ is obtained by {\em identification} of (ordered) pair of vertices $\tuple{u,v}$.
  This process is performed by removing $v$ and adding an edge $uw$, for every $w \in N(v)$. After that, multi-edges and loops
  are removed.

  Let $K_n$ ($n \in \nat$) be the clique of size $n$. The greatest integer $r$
  such that $K_r \subseteq G$ is the {\em clique number} $\omega(G)$ of $G$. A subset $I$ of vertices $V$ is
  an independent set, if none of its vertices are adjacent. A subset $W$ of vertices $V$ is called a vertex cover,
  if every edge in the graph is incident to at least one vertex of $W$. The smallest integer $r$ such that $|W|=r$ and
  $W$ is a vertex cover of $G$, is called the {\em vertex cover number}.
  
  In this paper we study coloring of vertices, and to avoid confusion, we distinguish two types of coloring.
  A {\em classic k-coloring} of a graph is a function $c : V \rightarrow \{1,\dots, k\}$, such that there are no two
  adjacent vertices $u$ and $v$, for which $c(u)=c(v)$. Given $G$, the smallest $k$ for which there exists a classic $k$-coloring
  for $G$ is called the {\em chromatic number} and is denoted as $\chi(G)$. A {\em triangle-free k-coloring} of a graph is a function
  $c : V \rightarrow \{1,\dots, k\}$, such that there are no three mutually adjacent vertices $u$, $v$ and $w$, for which $c(u)=c(v)=c(w)$.
  If such vertices exist, then the induced subgraph ($K_3$) is called a {\em monochromatic} triangle.
  Given $G$, the smallest $k$ for which there exists a triangle-free $k$-coloring for $G$ we call the {\em triangle-free chromatic number} and
  we denote it as $\chi_3(G)$.

  We now formulate the decision problems investigated in this paper, for any fixed integer $q>0$:

  \vspace{0.3\baselineskip}
    
  \noindent \textsc{TriangleFree}-q-\textsc{Coloring}

  \noindent {\em Input:} A finite, undirected, simple graph $G$.

  \noindent {\em Question:} Is there a triangle-free $q$-coloring of $G$?

  \vspace{0.3\baselineskip}

  \noindent \textsc{TriangleFreePolar}-q-\textsc{Coloring}

  \noindent {\em Input:} A finite, undirected, simple graph $G=(V,E)$ and a subset $S \subseteq E$.

  \noindent {\em Question:} Is there a triangle-free $q$-coloring of $G$, such that no edge in $S$ is monochromatic?
  
  \vspace{0.3\baselineskip}

  Fixed-parameter tractable (FPT) algorithm (w.r.t. parameter $k$) is an algorithm for whose running time is polynomial for
  any fixed value of $k$, i.e., running time is $O(f(k)\cdot n^c)$, where $n$ is a size of the input, $c$ is a constant and
  $f$ is a computable function.
  
  We denote $\ceil{x}$ to be the smallest integer not less than $x$. Let $r \in \nat$,
  we define a binary operator $+_r$ on the set $\mathbb{Z}_r=\{0,1,2,\dots,r-1\}$,
  which is addition modulo $r$, i.e., for every $n,m \in \mathbb{Z}_r$, $n+_rm= n+m$ $(\text{mod} \,\, r)$. The
  operator $-_r$ can be defined in a similar way. 
  
  \section{Bounds on the triangle-free chromatic number}

  We first give simple bounds on $\chi_3(G)$ in terms of $\omega(G)$ and $\chi(G)$:

  \begin{theorem}\label{thm:bounds}
    For any graph $G$: $\ceil{\frac{\omega(G)}{2}} \leq \chi_3(G) \leq \ceil{\frac{\chi(G)}{2}}$.
  \end{theorem}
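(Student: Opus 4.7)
The plan is to prove the two inequalities independently, each with an elementary argument.

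For the lower bound $\lceil \omega(G)/2 \rceil \leq \chi_3(G)$, I would first reduce to the clique case by observing that any triangle-free coloring of $G$ restricts to one of $K_{\omega(G)}$, so it suffices to show $\chi_3(K_r) \geq \lceil r/2 \rceil$. The key observation is that in a clique, every three vertices are mutually adjacent, so any color class containing at least three vertices induces a monochromatic triangle. Hence each color class has size at most $2$ in any triangle-free coloring of $K_r$, and partitioning $r$ vertices into classes of size at most $2$ requires at least $\lceil r/2 \rceil$ classes.

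For the upper bound $\chi_3(G) \leq \lceil \chi(G)/2 \rceil$, I would start from an optimal classic coloring $c : V \to \{1,\dots,\chi(G)\}$ and define a new coloring $c'$ by merging colors in pairs, for example $c'(v) = \lceil c(v)/2 \rceil$. This uses exactly $\lceil \chi(G)/2 \rceil$ colors. To verify that $c'$ is triangle-free, I would argue by contradiction: if $u,v,w$ were a monochromatic triangle under $c'$, then $c(u), c(v), c(w)$ would all lie in the same two-element pair $\{2i-1, 2i\}$, and by pigeonhole two of them must coincide under $c$. But those two vertices are adjacent (being in a triangle), contradicting that $c$ is a proper coloring.

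I do not anticipate any real obstacle; both directions follow from short pigeonhole arguments, and the only thing to be slightly careful about is that the pairing in the upper bound handles the parity of $\chi(G)$ correctly (when $\chi(G)$ is odd, the last color simply forms a singleton pair, which does not affect the argument since the required contradiction only needs at most two original colors inside any merged class).
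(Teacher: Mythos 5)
Your proof is correct and follows essentially the same route as the paper: the lower bound by noting each color class meets a maximum clique in at most two vertices, and the upper bound by merging the classes of an optimal proper coloring in pairs (the paper's ``standard recoloring strategy''), with your pigeonhole check of triangle-freeness being just a rephrasing of the paper's observation that a union of two independent sets contains no odd monochromatic cycle.
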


  \begin{proof}
    To see that the lower bound holds, take any clique in $G$ of maximum cardinality.
    We can use one color for at most two vertices of that clique, otherwise we would create a monochromatic triangle.
    Therefore we need at least $\ceil{\omega(G)/2}$ colors in order to make the coloring triangle-free
    in this clique, and therefore at least $\ceil{\omega(G)/2}$ colors are needed to triangle-free color the entire graph.
  
    The upper bound can be justified by the following argument.
    Let $k=\chi(G)$ and take any classic $k$-coloring of $G$. Let $V_i$ be the set of
    vertices colored $i$, where $1 \leq i \leq k$. For each $0 \leq j \leq \ceil{k/2}-1$ recolor sets $V_{2j+1} \cup V_{2j+2}$
    with $j$ (we may need to add empty set $V_{k+1}$, if $k$ is odd). Since every $V_i$ is an independent set, then after recoloring,
    any monochromatic cycle is of even length. Therefore the resulting coloring is triangle-free.
  \end{proof}

  \begin{remark}
    The recoloring procedure from the above proof we call the {\em standard recoloring strategy}. It will
    be used in the construction of algorithms in the next section.
  \end{remark}
  
  The fundamental question we ask is how tight are the bounds in Theorem \ref{thm:bounds} and what kind of algorithmic
  consequences are implied by these observations. The following
  theorem shows that the upper bound can be arbitrarily large, which means that we should not expect
  to find an algorithm for general graphs that would acceptably approximate $\chi_3(G)$ based on a chromatic number alone.
  
  \begin{theorem}
    For any $k \geq 1$, there exists a graph $G$ for which $\chi_3(G)=1$ and $\chi(G)=k$.
  \end{theorem}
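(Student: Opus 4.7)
The first observation I would make is that $\chi_3(G) = 1$ holds if and only if $G$ is triangle-free: coloring every vertex with a single color yields a valid triangle-free coloring precisely when no three mutually adjacent vertices exist. Thus the theorem reduces to exhibiting, for each $k \geq 1$, a triangle-free graph $G_k$ with $\chi(G_k) = k$. For $k=1$ a single isolated vertex works, and for $k=2$ the edge $K_2$ suffices.

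For $k \geq 3$ the plan is to invoke Mycielski's classical construction. Starting from $G_2 := K_2$, build $G_{k+1}$ from $G_k$ (with vertex set $\{v_1, \dots, v_n\}$) by adding fresh vertices $u_1, \dots, u_n, w$, joining each $u_i$ to every neighbor of $v_i$ in $G_k$, and joining $w$ to every $u_i$. I would then prove by induction on $k$ that $G_k$ is triangle-free and that $\chi(G_k) = k$, after which Theorem \ref{thm:bounds} (or the direct observation above) gives $\chi_3(G_k) = 1$.

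Triangle-freeness is the easy direction: the set $\{u_1, \dots, u_n\}$ is independent and $w$ is adjacent only to those vertices, so any purported triangle either lies wholly inside $G_k$ (impossible by the inductive hypothesis) or involves some $u_i$ together with two vertices $x, y \in V(G_k)$; in the latter case $x$ and $y$ would both be neighbors of $v_i$ in $G_k$ and also adjacent to each other, producing a triangle $v_i x y$ inside $G_k$. The upper bound $\chi(G_{k+1}) \leq k+1$ then follows by extending any proper $k$-coloring of $G_k$: give each $u_i$ the color of $v_i$ and assign $w$ a new color.

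The main obstacle, and the genuine content of the argument, is the matching lower bound $\chi(G_{k+1}) \geq k+1$. The idea is to assume for contradiction that a proper $k$-coloring $c$ of $G_{k+1}$ exists, take $c(w) = k$ without loss of generality so that every $u_i$ takes a color in $\{1, \dots, k-1\}$, and then define a coloring $c'$ of $G_k$ by keeping $c(v_i)$ when it differs from $k$ and replacing it with $c(u_i)$ otherwise. Exploiting the defining property that every neighbor of $v_i$ inside $G_k$ is also a neighbor of $u_i$ inside $G_{k+1}$, one checks that $c'$ remains proper on $G_k$ and avoids color $k$ entirely, contradicting the inductive hypothesis $\chi(G_k) = k$. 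This closes the induction and delivers the required graphs for every $k \geq 1$.
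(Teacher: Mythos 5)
Your proposal is correct and follows essentially the same route as the paper, which simply invokes Mycielski graphs (triangle-free with arbitrarily large chromatic number) and notes that triangle-freeness gives $\chi_3(G)=1$; you merely spell out the construction and the standard inductive proof that the paper leaves to the citation.
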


  \begin{proof}
    The class of graphs called Mycielski graphs \cite{mycielski} meet this property, as they are
    triangle-free and can have arbitrarily large chromatic number.
  \end{proof}

  On the other hand, we conjecture that the lower bound of Theorem \ref{thm:bounds} is almost tight.

  \begin{conjecture}\label{con:one}
    For any graph $G$: $\chi_3(G) \leq \ceil{\frac{\omega(G)}{2}}+1$.
  \end{conjecture}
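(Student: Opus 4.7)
The plan is to induct on $\omega(G)$, aiming to produce a triangle-free coloring that uses exactly one color beyond the packing lower bound $\ceil{\omega(G)/2}$ of Theorem~\ref{thm:bounds}. The natural route through Theorem~\ref{thm:bounds} itself -- applying the standard recoloring strategy to a $\chi(G)$-coloring -- cannot suffice alone, since $\chi(G)/\omega(G)$ can be arbitrarily large (Mycielski graphs), so the resulting bound on $\chi_3$ would blow up.

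The base cases $\omega(G)\le 2$ are immediate: such a graph is triangle-free and $\chi_3(G)=1\le \ceil{\omega(G)/2}+1$. For the inductive step with $\omega(G)=\omega\ge 3$, I would try to find a vertex set $U\subseteq V(G)$ satisfying (i) the induced subgraph on $U$ is triangle-free, so that assigning one fresh color to every vertex of $U$ creates no monochromatic triangle of that color, and (ii) $\omega(G-U)\le \omega-2$, so that by induction $\chi_3(G-U)\le \ceil{(\omega-2)/2}+1 = \ceil{\omega/2}$. Combining the inductive coloring of $G-U$ with the new color on $U$ then gives $\chi_3(G)\le \ceil{\omega/2}+1$, as required.

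The main obstacle is proving that such a $U$ always exists. Since the subgraph on $U$ must be triangle-free, $U$ contains at most two vertices of any clique of $G$, so condition (ii) forces $U$ to meet every maximum clique in exactly two vertices and every clique of size $\omega-1$ in at least one vertex, while keeping $U$ globally triangle-free -- a delicate ``2-transversal'' problem that seems hard to control when many maximum cliques overlap heavily. A natural alternative is a probabilistic argument: color each vertex independently and uniformly with one of $k=\ceil{\omega/2}+1$ colors, so that each triangle is monochromatic with probability $k^{-2}$, and apply the Lov\'asz Local Lemma to the events ``triangle $T$ is monochromatic''; this in turn requires bounding the dependency-graph degree in terms of $\omega$, which looks plausible in dense regimes but can fail if a single vertex lies in very many triangles. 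I therefore expect the heart of any proof of Conjecture~\ref{con:one} to be a purely structural lemma producing the 2-transversal above, and isolating and proving such a lemma to be the central difficulty.
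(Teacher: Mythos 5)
You are attempting to prove Conjecture~\ref{con:one}, which the paper itself leaves open: the paper only establishes the lower bound $\ceil{\omega(G)/2}\le\chi_3(G)$ (Theorem~\ref{thm:bounds}) and constructs graphs attaining $\ceil{\omega(G)/2}+1$ (Theorems~\ref{thm:clover} and~\ref{thm:k4free}), so there is no proof to compare against. Your text is in any case a programme rather than a proof: by your own account, the entire content is deferred to an unproved structural lemma asserting the existence of a set $U$ inducing a triangle-free subgraph with $\omega(G-U)\le\omega(G)-2$.

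That lemma is not merely missing -- in the form you need it, it is false, and the paper's own tight examples are counterexamples. For $\omega(G)=4$ your requirement says $V(G)$ splits into two triangle-free parts, which immediately yields a triangle-free $2$-coloring, i.e.\ $\chi_3(G)\le 2$; but the $2$-clover graph of Theorem~\ref{thm:clover} has $\omega=4$ and $\chi_3=3$. For $\omega(G)=3$ it says $V(G)$ splits into a triangle-free set and an independent set, again forcing $\chi_3(G)\le 2$; but the graph of Theorem~\ref{thm:k4free} is $K_4$-free with $\chi_3=3$. So the first application of your inductive step already fails on infinitely many graphs, and any repair that treats $\omega\in\{3,4\}$ as base cases simply returns you to the first open instances of the conjecture (e.g.\ whether every $K_4$-free graph is triangle-free $3$-colorable). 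The Lov\'asz Local Lemma fallback cannot close this gap either: with $k=\ceil{\omega/2}+1$ fixed, the probability $k^{-2}$ that a given triangle is monochromatic is a constant, so the LLL would need the number of triangles sharing a vertex with a given triangle to be bounded by roughly $k^{2}$, and that quantity is not bounded by any function of $\omega$ (a single edge can lie in arbitrarily many triangles while $\omega=3$), as you yourself suspect. In short, the conjecture remains open, and your reduction replaces it by a stronger statement that the paper's own constructions refute.
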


  To show that it is possible to reach the conjectured bound we construct an infinite
  class of graphs for which $\chi_3(G) = \ceil{\omega(G)/2}+1$.
  First, we present an auxiliary class of graphs that we call {\em cycle-cliques}, which will serve as
  building blocks of our construction.
  
  \begin{definition}[cycle-clique]\label{def:kcycleclique}
    Let $k \in \nat$. The {\em k-cycle-clique} is a graph that have exactly $5 \cdot k$ vertices
    $v_{i,j}$ for $i \in \{0, \dots, 4\}$ and $j \in \{0, \dots, k-1\}$
    and there is an edge between $v_{i,j}$ and $v_{i',j'}$ if and only if $|i-_5i'| \leq 1$.
    We will define a set $J_i = \{v_{i,j}: j \in \{0, \dots, k-1\}\}$. Each $J_i$ is called a {\em joint}.
  \end{definition}

  \begin{obs}
    For any $k$-cycle-clique it holds that $J_i \sim K_k$ and $J_i \cup J_{i +_5 1} \sim K_{2k}$, for $i \in \{0, \dots, 4\}$.
  \end{obs}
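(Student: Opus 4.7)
The plan is to verify both assertions by directly unpacking Definition~\ref{def:kcycleclique}. The observation is essentially a combinatorial reading of the adjacency rule, so no clever construction or counting is required; it is really a sanity check that the name ``cycle-clique'' is appropriate.

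For the first claim, I would take any two distinct vertices $v_{i,j}, v_{i,j'} \in J_i$ and note that they share the same first coordinate, so $|i -_5 i| = 0 \leq 1$; hence the adjacency condition is met and the two vertices are joined by an edge. Since this holds for every pair in the $k$-element set $J_i$, the induced subgraph on $J_i$ is isomorphic to $K_k$.

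For the second claim, pick any two distinct vertices $u, w \in J_i \cup J_{i+_5 1}$. If they lie in the same joint, the previous paragraph already provides adjacency; otherwise one has first coordinate $i$ and the other $i +_5 1$, and the cyclic distance of their first coordinates equals $1$, so again the adjacency rule fires. All $\binom{2k}{2}$ pairs are therefore edges of $G$, and since $|J_i \cup J_{i+_5 1}| = 2k$, we obtain $J_i \cup J_{i+_5 1} \sim K_{2k}$.

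The only point requiring care — and it is hardly an obstacle — is the interpretation of $|i -_5 i'|$. Taken literally, $i -_5 i'$ depends on the order of the operands, so to make the adjacency relation symmetric one must read $|\cdot|$ as the cyclic distance $\min(i -_5 i', i' -_5 i)$ on $\mathbb{Z}_5$. Once that convention is fixed, both statements drop out of a one-line case check, and no further work is needed.
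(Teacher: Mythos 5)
Your proof is correct and matches the paper's intent: the paper states this Observation without proof precisely because it follows from a direct unpacking of Definition~\ref{def:kcycleclique}, which is exactly the check you carry out. Your remark that $|i -_5 i'| \leq 1$ should be read symmetrically as cyclic distance at most $1$ on $\mathbb{Z}_5$ is the right (and only) clarification needed.
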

  
  \begin{example}\label{ex:kcycleclique}
    In Figures 1b 
    and 1c 
    we present $k$-cycle-cliques for $k \in \{1,2\}$.
    In both graphs joints are marked with dashed lines.
    The 1-cycle-clique is simply a cycle of length 5. Notice how $J_i \cup J_{i+_51}$ forms a clique of size $2k$,
    for $i \in \{0,\dots,4\}$, whereas each $J_i$ is a clique of size $k$.
    We also note that the graph in Figure 1c 
    is the same as the gadget
    presented in \cite{karpinski}, where the author uses it to encode what he calls {\em super-edges},
    in his $\mathcal{NP}$-completeness proof of the \textsc{TriangleFree}-2-\textsc{Coloring} problem.
    The cycle-clique is a generalization of this gadget.
  \end{example}
  
  \begin{figure}[t!]
    \captionsetup[subfigure]{labelformat=parens,labelsep=space,font=small,justification=centering}
    \begin{minipage}[c][9cm][t]{.70\linewidth}
      \vspace*{\fill}
      \centering
      \includegraphics[height=7cm]{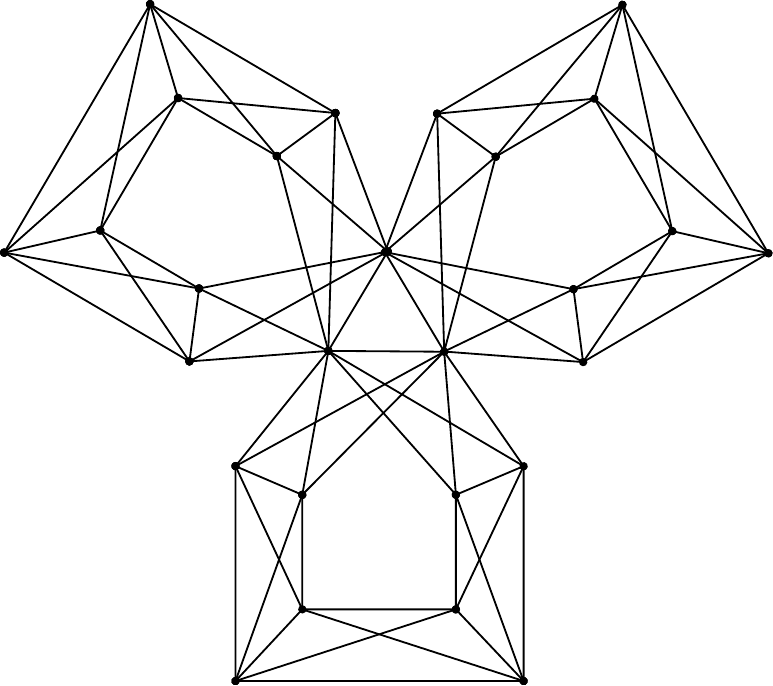}
      \captionof{subfigure}{2-clover graph}\label{fig:tract}
    \end{minipage}
    \begin{minipage}[c][9cm][t]{.25\linewidth}
      \centering
      \def\svgwidth{\linewidth}
      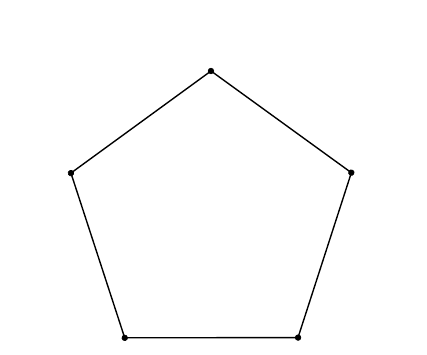
      \captionof{subfigure}{1-cycle-clique}\label{subfig:ex:a}
      \par\vfill
      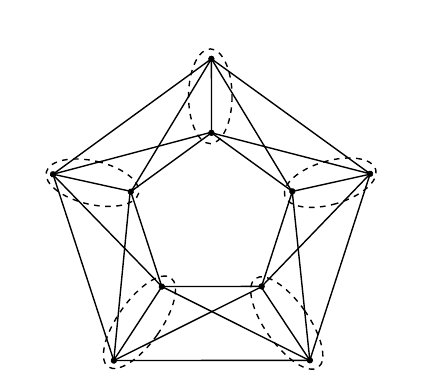
      \captionof{subfigure}{2-cycle-clique}\label{subfig:ex:b}
    \end{minipage}
    \caption{Examples of new classes of graphs.}
    \label{fig:examples}
  \end{figure}

  \begin{obs}\label{obs:2clique}
    Let $k \geq 1$. In every triangle-free k-coloring of graph $K_{2k}$ every color is used exactly twice.
  \end{obs}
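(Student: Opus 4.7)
The plan is to combine a local triangle-avoidance constraint with a global pigeonhole count. First I would observe that in the complete graph $K_{2k}$ every three vertices are pairwise adjacent, so any three vertices of the same color would automatically induce a monochromatic $K_3$. Hence in any triangle-free coloring, each color class must have size at most $2$.

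Next I would count: the $2k$ vertices of $K_{2k}$ are partitioned into $k$ color classes, each of size at most $2$, yielding a total of at most $2k$ vertices. Since the total is exactly $2k$, every inequality must be tight, so each color class contains exactly $2$ vertices.

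I do not expect a real obstacle here; the statement is essentially a two-line pigeonhole argument, and the only thing worth emphasizing in the write-up is that we use both directions of the bound (``at most two per color'' together with the vertex count $2k$) to force equality.
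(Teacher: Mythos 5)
Your argument is correct and is exactly the reasoning the paper relies on (the observation is stated without proof precisely because of this two-line pigeonhole argument): at most two vertices per color since any three mutually adjacent same-colored vertices form a monochromatic triangle, and $k$ classes of size at most $2$ covering $2k$ vertices forces every class to have size exactly $2$.
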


  \begin{lemma}\label{lma:diffcolors}
    Let $k \geq 1$. In every triangle-free k-coloring of k-cycle-clique each vertex of $J_i$ has a unique color (w.r.t. other vertices of $J_i$),
    for $i \in \{0, \dots, 4\}$.
  \end{lemma}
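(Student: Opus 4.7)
The plan is to exploit Observation \ref{obs:2clique}, applied to each of the five $K_{2k}$-cliques induced by adjacent joints $J_i \cup J_{i+_5 1}$, together with the fact that the joints are arranged in an odd $5$-cycle.

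First I would fix a triangle-free $k$-coloring and, for every joint $J_i$, let $D_i$ denote the set of colors appearing \emph{twice} in $J_i$ (colors cannot appear three times inside $J_i$, since $J_i$ is a $k$-clique). Observation \ref{obs:2clique} applied to each $K_{2k}$ on $J_i \cup J_{i+_5 1}$ forces every color in $\{1,\dots,k\}$ to appear exactly twice there, and a short case analysis on how each color splits between $J_i$ and $J_{i+_5 1}$ yields three facts simultaneously: (i) $D_i \cap D_{i+_5 1} = \emptyset$, (ii) the singleton sets $S_i$ and $S_{i+_5 1}$ coincide, and (iii) $|D_i| = |D_{i+_5 1}|$. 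Propagating these equalities around the $5$-cycle, all five $|D_i|$ become a common value $d$ and all $S_i$ become one common set $S$ with $|S| = k - 2d$. The lemma then reduces to showing $d = 0$.

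For that I would use a double-counting argument across all five joints. Since only $k$ colors are available, and $S$ is disjoint from every $D_i$, we get $|\bigcup_i D_i| \leq k - |S| = 2d$. On the other hand, for each color $c \in \bigcup_i D_i$ the index set $\{i : c \in D_i\}$ is an independent set in $C_5$ (by (i)), so it has size at most $2$. Thus
\[
5d \;=\; \sum_{i=0}^{4} |D_i| \;\leq\; 2\,\Bigl|\bigcup_{i=0}^{4} D_i\Bigr| \;\leq\; 4d,
\]
which is only possible when $d = 0$. Consequently each joint $J_i$ has no repeated color and all its $k$ vertices carry distinct colors, as claimed.

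The main obstacle is precisely this last counting step, where the odd length of the outer cycle is crucial. If the joints sat on a $6$-cycle instead, independent sets would reach size $3$, the inequality above would read $6d \leq 6d$, and no contradiction would follow. In this sense the lemma is genuinely a statement about the interaction between Observation \ref{obs:2clique} and the parity of the cyclic arrangement of the joints --- the same parity feature that later allows cycle-cliques to serve as building blocks for the tight examples in Conjecture \ref{con:one}.
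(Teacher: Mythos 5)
Your proof is correct. Every step checks out: Observation \ref{obs:2clique} applied to each $J_i \cup J_{i+_51} \sim K_{2k}$ does give $D_i \cap D_{i+_51}=\emptyset$, $S_i=S_{i+_51}$ and hence $|D_i|=|D_{i+_51}|$; together with $|S_i|+2|D_i|=k$ and the fact that an independent set in $C_5$ has at most two vertices, the count $5d=\sum_i|D_i|\leq 2\bigl|\bigcup_i D_i\bigr|\leq 2(k-|S|)=4d$ forces $d=0$. The route differs from the paper's in structure: the paper argues locally by contradiction, following a \emph{single} repeated color of $J_i$ two steps around the cycle --- by Observation \ref{obs:2clique} that color is absent from $J_{i+_51}$ and $J_{i-_51}$, hence doubled in each of $J_{i+_52}$ and $J_{i-_52}$, but these two joints are adjacent (this is where oddness enters), so their union is a $K_{2k}$ containing the color four times, contradicting Observation \ref{obs:2clique} directly. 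That argument is shorter and needs no bookkeeping of the sets $D_i$, $S_i$; your global double-counting is heavier but makes the quantitative role of the odd cycle length explicit (your $6$-joint remark matches the paper's own remark that the construction fails with an even number of joints), and it yields the extra structural facts that all joints share the same singleton color set. Both proofs rest on the same key lemma and the same parity feature, so the difference is one of packaging rather than of essential ideas.
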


  \begin{proof}
    Proof by contradiction.
    Let's say that there is a triangle-free k-coloring of k-cycle-clique and $i \in \{0, \dots, 4\}$
    such that there exist two vertices in $J_i$ that have the same color.
    Using Observation \ref{obs:2clique} we know that this color cannot appear in $J_{i+_51}$ or $J_{i-_51}$.
    Using Observation \ref{obs:2clique} again, this color need to be used twice in each of $J_{i+_52}$ and $J_{i-_52}$.
    Since $J_{i+_52} \cup J_{i-_52} \sim K_{2k}$ from the definition and that four vertices of $J_{i+_52} \cup J_{i-_52}$ are
    colored with the same color, we reach a contradiction with the assumption that the initial coloring was triangle-free.
  \end{proof}

  Now we present the contraction scheme of cliques, which is a procedure used in our construction.

  \vspace{0.2cm}
  
  \noindent {\em Clique contraction scheme:} Let $U=\tuple{u_0,\dots,u_{k-1}}$, $V=\tuple{v_0,\dots,v_{k-1}}$,
  $W=\langle w_0$,$\dots$,$w_{k-1}\rangle$ be the cliques of size $k$. Perform vertex identification on the following
  pairs of vertices (in that order):

  \begin{enumerate}
    \item $(v_i, u_i)$ for $i \in \{0, \dots, k-2\}$,
    \item $(v_i, w_i)$ for $i \in \{0, \dots, k-3\}$,
    \item $(v_{k-1}, w_{k-2})$ and $(u_{k-1}, w_{k-1})$.
  \end{enumerate}

  \begin{example}
    In Figure \ref{fig:ex:scheme} we present a contraction of three cliques: $U=\tuple{u_0,u_1,u_2}$, $V=\tuple{v_0,v_1,v_2}$
    and $W=\tuple{w_0,w_1,w_2}$, using the above procedure. Figure \ref{subfig:exs:a} show the initial state. Dashed lines
    connect vertices that will be identified in the first step of the scheme, namely, $\tuple{v_0,u_0}$ and $\tuple{v_1,u_1}$.
    The result of the first step is presented in Figure \ref{subfig:exs:b}. Next, vertices $\tuple{v_0,w_0}$ are identified
    (Figure \ref{subfig:exs:c}) and lastly, we identify vertices $\tuple{v_2,w_1}$ and $\tuple{u_2,w_2}$. The final result
    is presented in Figure \ref{subfig:exs:d}. Notice that the clique contraction scheme, given three cliques of size $k$, outputs
    one clique of size $k+1$.
  \end{example}

  \begin{figure}[t!]
    \centering
    \subfloat[\label{subfig:exs:a}]{\def\svgwidth{0.22\textwidth}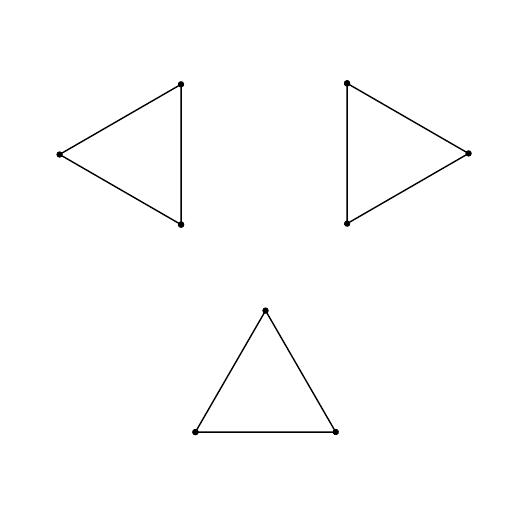}~
    \subfloat[\label{subfig:exs:b}]{\def\svgwidth{0.22\textwidth}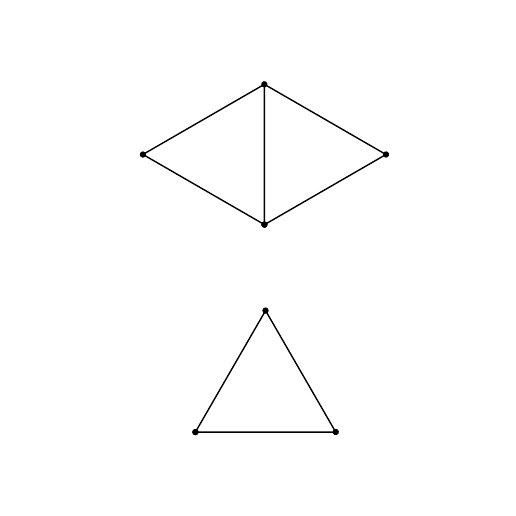}~
    \subfloat[\label{subfig:exs:c}]{\def\svgwidth{0.22\textwidth}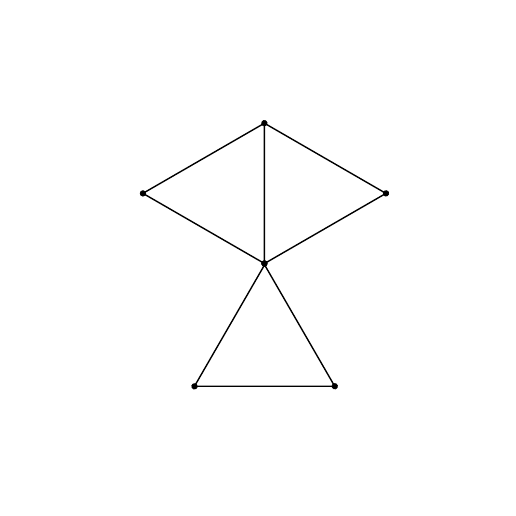}~
    \subfloat[\label{subfig:exs:d}]{\def\svgwidth{0.22\textwidth}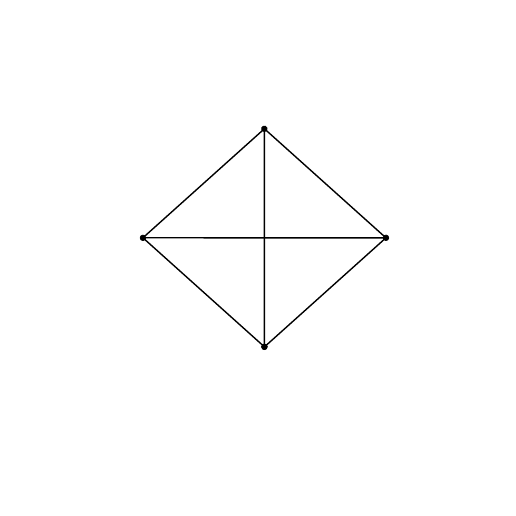}
    \caption{Sample run of the clique contraction scheme.}
    \label{fig:ex:scheme}
  \end{figure}


  We are ready to present the desired construction.
  
  \begin{definition}[k-clover graph]
    Let $G_1=(\cup_{i=0}^4 V_i, E_1)$, $G_2=(\cup_{i=0}^4 U_i, E_2)$ and $G_3=(\cup_{i=0}^4 W_i, E_3)$
    be k-cycle-cliques, where $k \geq 2$.
    A graph created by running the clique contraction scheme on sets $V_0$, $U_0$ and $W_0$
    is called a {\em k-clover graph}.
  \end{definition}

  \begin{example}\label{ex:tract}
    The example of k-clover graph is presented in Figure 1a, 
    for $k=2$. Notice how contracting three joints of size 2 results in a creation of a triangle ($K_3$) in the middle.
  \end{example}
  
  \begin{lemma}\label{lma:2k}
    There is no k-clover graph $G$ such that $K_{2k + 1} \subseteq G$, for $k \geq 2$.
  \end{lemma}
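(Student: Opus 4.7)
My plan is to argue by contradiction: suppose $Q \subseteq V(G)$ induces $K_{2k+1}$, and let $C$ denote the central $(k+1)$-clique produced by contracting $V_0$, $U_0$, $W_0$. The key structural observation I would use throughout is that the clique contraction scheme identifies only vertices of $V_0 \cup U_0 \cup W_0$, so no edge of $G$ joins two non-central vertices coming from different cycle-cliques. Consequently $Q \setminus C$ is contained in $V(G_i)$ for a single $i$, and by the symmetry between $G_1, G_2, G_3$ I may assume $i = 1$.

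If $Q \cap C = \emptyset$ then $Q$ lies entirely inside the cycle-clique $G_1$. The joints of $G_1$ are pairwise adjacent exactly when their indices are consecutive modulo $5$, so any clique of $G_1$ uses at most two consecutive joints; combined with $J_i \cup J_{i +_5 1} \sim K_{2k}$ this gives $|Q| \leq 2k$, a contradiction.

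If $Q \cap C \neq \emptyset$, I would first cut down $Q \setminus C$. In $G_1$ the joint $V_0$ is adjacent only to $V_1$ and $V_4$, so vertices of $V_2$ and $V_3$ have no neighbour in $C$; moreover $V_1$ and $V_4$ are non-adjacent. Hence $Q \setminus C$ lies in one of $V_1, V_4$, say $V_1$. Next I bound $|Q \cap C|$ by counting central vertices adjacent to $V_1$: a merged class in $C$ is adjacent to $V_1$ precisely when it contains some vertex of $V_0$. Tracing the three steps of the scheme shows that the $k$ vertices of $V_0$ are distributed among $k$ distinct central classes, while the remaining class $\{u_{k-1}, w_{k-1}\}$ contains no vertex of $V_0$. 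Therefore $|Q \cap C| \leq k$ and $|Q \setminus C| \leq |V_1| = k$, giving $|Q| \leq 2k$, the desired contradiction.

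The main obstacle is the bookkeeping in the last step: I need to confirm that the scheme distributes the $k$ vertices of $V_0$ into $k$ pairwise different equivalence classes of $C$, with exactly one class (namely $\{u_{k-1}, w_{k-1}\}$) avoiding $V_0$. Once this is verified, the reduction to the adjacency-and-size count above completes the argument.
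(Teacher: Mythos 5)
Your argument is correct, but it takes a genuinely different (and considerably more detailed) route than the paper's. The paper disposes of the lemma in two sentences by a degree count: it asserts that only the $k+1$ vertices created by the identifications can have degree at least $2k$, so the $2k+1$ vertices of a hypothetical $K_{2k+1}$ cannot all be found. Taken literally that claim is shaky --- every vertex of a $k$-cycle-clique, and hence every non-identified vertex of the clover graph, already has degree $3k-1 \geq 2k$ --- so the intended reading must be about which vertices can have $2k$ mutually adjacent neighbours, which is essentially what your structural analysis establishes. Your decomposition (split a hypothetical clique $Q$ into $Q\cap C$ and $Q\setminus C$, confine $Q\setminus C$ to one cycle-clique and then, when $Q\cap C\neq\emptyset$, to a single joint such as $V_1$, and bound $|Q\cap C|$ by the central classes meeting $V_0$) is airtight and in fact pins down $\omega(G)=2k$ exactly, which is what Theorem \ref{thm:clover} actually needs; that rigor is what your longer approach buys. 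Two small touch-ups: in the case $Q\cap C\neq\emptyset$ you should record the trivial subcase $Q\subseteq C$, where $|Q|\leq k+1<2k+1$ already, since your bound $|Q\cap C|\leq k$ presupposes a vertex of $Q$ inside $V_1$; and the bookkeeping you flag as the main obstacle is lighter than you fear --- you only need that at most $k$ central classes contain a vertex of $V_0$, which is immediate from $|V_0|=k$ and from the fact that identifications never merge two vertices of the same joint, and this weak form also holds verbatim for $U_0$ and $W_0$, so the appeal to symmetry among $G_1,G_2,G_3$ is harmless even though the contraction scheme does not treat the three cliques identically.
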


  \begin{proof}
    Only vertices that can have degree at least $2k$ are those that participated in vertex identification.
    In graph $G$ there is exactly $k+1$ such vertices.
  \end{proof}

  \begin{lemma}\label{lma:k+1}
    There is no triangle-free k-coloring of k-clover graph.
  \end{lemma}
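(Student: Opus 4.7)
The plan is to proceed by contradiction. I would assume a triangle-free $k$-coloring $c$ of the $k$-clover graph $G$ exists and derive a contradiction from a pigeonhole argument applied to the $(k+1)$-clique $C$ produced by the clique contraction scheme on $V_0, U_0, W_0$.

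First, I would observe that restricting $c$ to any of the three underlying $k$-cycle-cliques $G_1, G_2, G_3$ yields a triangle-free $k$-coloring of that cycle-clique. By Lemma \ref{lma:diffcolors} this forces each of the three joints $V_0$, $U_0$, $W_0$ to use all $k$ colors, each exactly once, even after they are embedded in $G$ via contraction. In particular, no two vertices lying together in one of these joints can share a color.

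Next, I would analyze the bookkeeping of the clique contraction scheme to identify, for each of $V_0, U_0, W_0 \subseteq C$, the unique vertex of $C$ that it omits. Tracking the three steps of the scheme shows that $V_0, U_0, W_0$ become three $k$-cliques inside $C$ whose omitted vertices are pairwise distinct (in the natural labeling they are $u_{k-1}$, $v_{k-1}$, and $v_{k-2}$). Consequently, for every pair $\{x,y\} \subseteq C$ at least one of $V_0, U_0, W_0$ contains both $x$ and $y$, since the three omitted singletons are pairwise different and a $2$-element set cannot contain three distinct elements.

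To conclude, pigeonhole on the $k+1$ vertices of $C$ colored by only $k$ colors produces a monochromatic pair $\{x,y\} \subseteq C$, which by the previous step lies inside some joint $J \in \{V_0, U_0, W_0\}$ and so contradicts Lemma \ref{lma:diffcolors}. The main obstacle will be the combinatorial verification that the three joint-omitted vertices of $C$ are actually distinct for every $k \geq 2$, including the boundary values $k=2$ and $k=3$ where some index ranges in the scheme are empty; once this is established, the pigeonhole step is immediate.
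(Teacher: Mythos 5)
Your proof is correct, and it closes the argument by a slightly different route than the paper. Both proofs rest on the same foundation: argue by contradiction, note that each cycle-clique survives as a subgraph of the clover graph (no two vertices of the same cycle-clique are ever identified), and invoke Lemma \ref{lma:diffcolors} to force all $k$ colors to appear exactly once on each of the contracted joints $V_0$, $U_0$, $W_0$. The paper then finishes by a direct color-forcing chain: since $V_0$ and $U_0$ share the $k-1$ vertices $v_0,\dots,v_{k-2}$, the leftover vertices $v_{k-1}$ and $u_{k-1}$ must receive the same color, and the step-3 identifications turn these into $w_{k-2}$ and $w_{k-1}$ inside $W_0$, contradicting Lemma \ref{lma:diffcolors} there. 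You instead work with the central $(k+1)$-clique $C$: the three joint images are $k$-subsets of $C$ omitting the pairwise distinct vertices $u_{k-1}$, $v_{k-1}$, $v_{k-2}$, so every pair of vertices of $C$ lies in some joint, and pigeonhole on $k+1$ vertices versus $k$ colors produces a monochromatic pair inside a joint. Your bookkeeping is right, including the boundary cases you flag: for $k=2$ the omitted vertices are $u_1, v_1, v_0$ and for $k=3$ they are $u_2, v_2, v_1$, all distinct, and the empty index ranges in steps 1--2 cause no trouble. The trade-off is minor: the paper's finish is shorter and pinpoints the offending pair explicitly, while yours is more symmetric, makes the role of the $(k+1)$-clique from the contraction scheme explicit, and does not care which particular pair collides.
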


  \begin{proof}
    Proof by contradiction.
    By Lemma \ref{lma:diffcolors} we know that in each set $V_0=\tuple{v_0,\dots,v_{k-1}}$,
    $U_0=\tuple{u_0,\dots,u_{k-1}}$ and $W_0=\tuple{w_0,\dots,w_{k-1}}$ vertices need to have different colors.
    Because we identify vertices $(v_i, u_i)$ for $i \in \{0, \dots, k-2\}$, it means that $v_{k-1}$ and $u_{k-1}$
    need to have the same color.
    But because we identify $(v_{k-1}, w_{k-2})$ and $(u_{k-1}, w_{k-1})$ that mean $w_{k-2}$ and $w_{k-1}$ need to have the same color,
    which contradicts Lemma \ref{lma:diffcolors}.
  \end{proof}
  
  \begin{theorem}\label{thm:clover}
    For any $k \geq 2$, there exists a graph $G$, such that $\chi_3(G) = \ceil{\omega(G)/2}+1$.
  \end{theorem}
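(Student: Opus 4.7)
The plan is to show that the $k$-clover graph $G$ meets the bound with equality, i.e., $\chi_3(G) = k+1 = \ceil{\omega(G)/2}+1$. Two of the three needed (in)equalities come for free from the lemmas already established. Lemma~\ref{lma:2k} gives $\omega(G) \leq 2k$; for the matching lower bound, observe that in any of the three constituent $k$-cycle-cliques the joint pair $J_1 \cup J_2$ is untouched by the clique contraction scheme and therefore still induces a copy of $K_{2k}$ in $G$. Lemma~\ref{lma:k+1} directly delivers $\chi_3(G) \geq k+1$. Hence the one substantive task is to exhibit a triangle-free $(k+1)$-coloring of $G$.

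I would proceed as follows. Let $\{p_0, p_1, \ldots, p_k\}$ denote the $k+1$ vertices produced by the clique contraction scheme; by construction these form a $K_{k+1}$ in $G$. Assign them pairwise distinct colors $c_0, \ldots, c_k$. Unfolding the identifications, the contracted joint viewed inside $G_1$ uses the $k$-element color set $C_1 = \{c_0, \ldots, c_{k-1}\}$, the contracted joint inside $G_2$ uses $C_2 = \{c_0, \ldots, c_{k-2}, c_k\}$, and the contracted joint inside $G_3$ uses $C_3 = \{c_0, \ldots, c_{k-3}, c_{k-1}, c_k\}$, each a $k$-subset of the $k+1$ available colors. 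For each cycle-clique $G_j$ ($j=1,2,3$), I color each of the four remaining joints by an arbitrary bijection with the color set $C_j$.

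It remains to verify that no monochromatic triangle is produced. The key structural fact is that the clique contraction scheme creates no edges between non-contracted vertices of different cycle-cliques, so every vertex outside the contracted joint belongs to only one $G_j$; consequently every triangle of $G$ either lies entirely inside a single $G_j$ or has all three vertices inside the contracted $K_{k+1}$. In the latter case the $k+1$ joint vertices carry distinct colors, so no monochromatic triangle can occur. In the former case, each joint of $G_j$ uses every color of $C_j$ exactly once, so any two consecutive joints $J_i \cup J_{i+1} \sim K_{2k}$ use each color of $C_j$ exactly twice, ruling out a monochromatic triangle; and every triangle in $G_j$ is contained in some such $J_i \cup J_{i+1}$ (or in a single $J_i$), since non-consecutive joints are non-adjacent. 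The step I expect to be most delicate is precisely this structural analysis of triangles across the identification, though as noted it reduces to the observation that the scheme produces no unintended cross-clique adjacencies.
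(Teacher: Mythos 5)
Your proof is correct and follows essentially the same route as the paper: $\omega(G)=2k$ via Lemma~\ref{lma:2k} together with an untouched $K_{2k}$ inside a cycle-clique, the lower bound $\chi_3(G)\geq k+1$ from Lemma~\ref{lma:k+1}, and a triangle-free $(k+1)$-coloring obtained by rainbow-coloring the contracted $K_{k+1}$ and extending into each cycle-clique. Your explicit bijection-on-each-joint extension and triangle analysis simply spell out the step the paper dispatches by appealing to Lemma~\ref{lma:diffcolors}.
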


  \begin{proof}
    Let $G$ be a k-clover graph. From the definition of a cycle-clique, $\omega(G) \geq 2k$ and
    from Lemma \ref{lma:2k} and Theorem \ref{thm:bounds}, $\omega(G) \leq 2k$, therefore $\omega(G)=2k$. We have $\chi_3(G) \geq k + 1$ by Lemma \ref{lma:k+1}.
    We show that there exist a triangle-free coloring of graph $G$ that uses $k+1$ colors. Let $U,V,W$ be the cliques
    used in the contraction scheme, when constructing $G$. The result of the contraction is a clique of size $k+1$. Assign a unique
    color to each of its vertices. Notice that $U$, $V$ and $W$ are cliques of size $k$ where each vertex has a unique color.
    By Lemma \ref{lma:diffcolors} we can extend the coloring of $U$, $V$ and $W$ to their respective k-cycle-cliques. This proves that
    $\chi_3(G) = k + 1 = \frac{2k}{2} + 1 = \ceil{\omega(G)/2}+1$.
  \end{proof}

  \begin{remark}
    Notice that we have used 5 joints in the construction of a single k-cycle-clique. We comment on why constant 5
    is important here. Notice that if we have used even number of joints, then the property of Lemma \ref{lma:diffcolors}
    would not hold. Also, if we were to use 3 joints, then the k-cycle-clique would be isomorphic to $K_{3k}$, and therefore would
    not be triangle-free k-colorable -- we leave the verification of those facts to the reader, as an easy exercise. In conclusion,
    5 is the smallest possible number of joints in the k-cycle-clique for it to hold the desired properties. But in fact, any odd number
    of joints greater or equal to 5 can be used.
  \end{remark}

  Another issue worth discussing is that the proof of Theorem \ref{thm:clover}
  shows the existence of graphs for which $\chi_3(G) = \ceil{\omega(G)/2}+1$, but only
  when $\omega(G)$ is even. We now show that there exists a construction for $r=3$, and
  we leave the existence of such graphs for other values of $r$ as an open problem.

  \begin{definition}\label{def:k4free}
    The {\em $K_4$-free polar gadget} is the graph $G=(V,E)$ with $V=\{u$, $v$, $w_1$, $w_2$, $w_3$, $w_4$, $z_1$, $z_2$, $z_3$, $y_1$, $y_2$, $y_3\}$, and edge set
    defined as in Figure \ref{fig:k4free}.
  \end{definition}

  \setlength{\abovecaptionskip}{2pt}
  \setlength{\belowcaptionskip}{-10pt}

  \begin{figure}[t!]
    \def\svgwidth{0.5\linewidth}\centering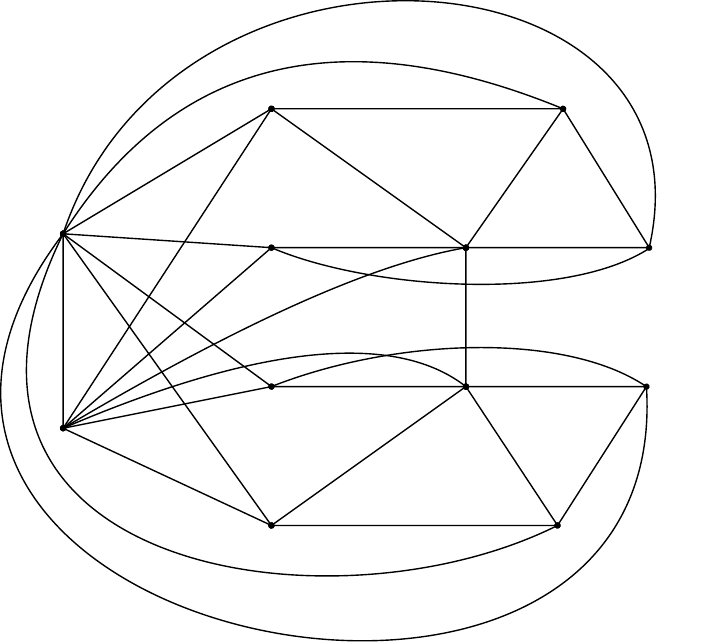
    \caption{The $K_4$-free polar gadget. It consists of 12 vertices, 30 edges and 20 triangles.}
    \label{fig:k4free}
  \end{figure}
  
  \noindent We say that egde $xy$ is {\em polar}, if in every triangle-free 2-coloring $c$, $c(x) \neq c(y)$.
  
  \begin{lemma}\label{lma:k4free}
    Let $G$ be a $K_4$-free polar gadget. Then the following are true:
    (i) $G$ is triangle-free 2-colorable,
    (ii) $uv$ is a polar edge,
    (iii) $G$ is $K_4$-free.
  \end{lemma}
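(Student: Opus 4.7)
The plan is to verify the three properties in order of increasing difficulty: (iii) by direct structural inspection, (i) by exhibiting an explicit coloring and checking the finite list of triangles, and (ii) by a case analysis that propagates forced colors from the assumption $c(u)=c(v)$.

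For (iii), I would argue that $G$ contains no $K_4$ by checking, for each vertex $x$, that the induced subgraph on $N(x)$ is triangle-free; a $K_4$ through $x$ would correspond to a triangle inside $N(x)$. From the figure, the only vertices with large enough degree to possibly be in a $K_4$ are $u,v,w_1,w_2,w_3,w_4$ (the outer vertices $z_i,y_i$ each have small degree). Listing the neighborhoods of those six vertices and checking that none contains a triangle yields the claim. For (i), I would simply produce a triangle-free $2$-coloring: set $c(u)=1$, $c(v)=2$, and then assign colors to the $w_i, z_i, y_i$ guided by the left-right symmetry of the gadget (for instance, alternating along the $w$-column and picking $z_i,y_i$ colors so that each of the 20 triangles has two vertices of one color and one of the other). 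Verifying triangle-freeness is then a finite check over the 20 listed triangles.

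For (ii), the main work, I would argue by contradiction: suppose $c$ is a triangle-free $2$-coloring with $c(u)=c(v)$; without loss of generality, $c(u)=c(v)=1$. The edge $uv$ together with the four ``connector'' vertices $w_1,w_2,w_3,w_4$ appears to form triangles $uvw_i$ (or at least each $w_i$ is adjacent to both $u$ and $v$, which is the standard polar-gadget pattern), so the triangle-free condition forces $c(w_i)=2$ for all $i$. The forced color $2$ on the $w_i$'s then interacts with the right-hand part of the gadget: each triangle of the form $w_i w_j z_k$ or $w_i w_j y_k$ (identified from the figure) forces $c(z_k)=1$ or $c(y_k)=1$. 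After these propagations, at least one triangle consisting entirely of vertices colored $1$ (for example, among $\{u, z_k, y_{k'}\}$, $\{v, z_k, y_{k'}\}$, or purely inside the $\{z_i, y_i\}$ block) should emerge, producing the required contradiction. The rightmost step can be shortened using the obvious symmetry of the gadget, which swaps the $z$-side and $y$-side and reduces the number of subcases.

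The hard part will be bookkeeping for (ii): one has to be sure that the edges read off from the figure really do force every $w_i$ to take the opposite color of $u$ and $v$, and that the forced colors on $\{w_i\} \cup \{z_i, y_i\}$ genuinely close up into a monochromatic triangle in every branch of the case split. In practice I would tabulate, for each triangle of $G$, which two of its vertices have colors already forced, and then read off the forced color of the third vertex; the contradiction appears as soon as two such forced values conflict. Barring an unexpected ``escape'' coloring, the argument is finite and mechanical, so the real challenge is only to enumerate the triangles carefully and organize the case analysis via the left-right symmetry.
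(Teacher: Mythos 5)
Your high-level plan coincides with the paper's (exhibit a coloring for (i), propagate forced colors from $c(u)=c(v)$ for (ii), do a finite structural check for (iii)), but since the lemma is a claim about one specific 12-vertex gadget, the proof \emph{is} the finite verification, and your sketch misreads the gadget exactly where that verification has content. For (ii): the set $\{w_1,w_2,w_3,w_4\}$ is independent (the paper uses this fact in part (iii)), so there are no triangles of the form $w_iw_jz_k$ or $w_iw_jy_k$, and after the triangles $\{u,v,w_i\}$ force $c(w_i)=2$ no further color is forced by unit propagation alone — your propagation stalls. The actual argument needs the triangle $\{v,z_1,y_1\}$, which yields only the disjunction ``$c(z_1)=2$ or $c(y_1)=2$''; one then breaks the $z/y$ symmetry (say $c(z_1)=2$), gets $c(z_2)=1$ from $\{w_1,z_1,z_2\}$, and the contradiction is that $z_3$ admits no color because of $\{u,z_2,z_3\}$ and $\{w_2,z_1,z_3\}$. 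Your proposed terminal contradictions of the form ``monochromatic $\{u,z_k,y_{k'}\}$'' cannot occur: $\{u,z_1,y_1\}$ is not a triangle — indeed the paper's coloring for (i) is precisely $c(u)=c(z_1)=c(y_1)=1$ and color 2 elsewhere.

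For (iii), the reduction ``a $K_4$ through $x$ is a triangle inside $N(x)$'' is sound, but your reason for skipping the $z_i,y_i$ is false: a vertex needs degree only 3 to lie in a $K_4$, and each $z_i,y_i$ has degree at least 4 (already the triangles $\{w_1,z_1,z_2\}$ and $\{u,z_2,z_3\}$ used above give $z_2$ the neighbors $w_1,z_1,u,z_3$). So after checking $u,v,w_1,\dots,w_4$ you must still exclude a $K_4$ lying entirely inside $\{z_1,z_2,z_3,y_1,y_2,y_3\}$; the paper does this by a degree count within that induced block (only $z_1$ and $y_1$ have enough neighbors there). Finally, for (i) you never actually exhibit a coloring — the ``alternating along the $w$-column'' heuristic is unnecessary (in the paper's coloring all $w_i$ receive the same color), and without a concrete assignment the 20-triangle check cannot be carried out. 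As written, the proposal is a plan for the verification rather than the verification, and two of its concrete steps (the $w_iw_j$ triangles and the degree-based exclusion) would fail.
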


  \begin{proof}
    {\em Ad. (i):} Let $S=\{u,z_1,y_1\}$. Set $c(S)=1$ and $c(V \setminus S)=2$.

    {\em Ad. (ii):} Assume by contradiction, that $c(u)=c(v)=1$ and $G$ is triangle-free 2-colorable. Then $c(\{w_1,w_2,w_3,w_4\})=2$, which is forced.
    Since $c(v)=1$, then either $c(z_1)=2$, or $c(y_1)=2$. Assume the former, without loss of generality. Then $c(z_2)=1$, otherwise
    $\{w_1,z_1,z_2\}$ would be monochromatic. But this means that we cannot color $z_3$, because if $c(z_3)=1$, then $\{u,z_2,z_3\}$ is monochromatic, and
    if $c(z_3)=2$, then $\{w_2,z_1,z_3\}$ is monochromatic -- a contradiction.

    {\em Ad. (iii):} Let $V_1=\{u,v\}$, $V_2=\{w_1,w_2,w_3,w_4\}$, $V_3=\{z_1,z_2,z_3,y_1,y_2,y_3\}$.
    If there exists $H \subset G$ that is isomorphic to $K_4$, then we can observe that:

    \begin{itemize}
      \item $H$ contains at most one vertex from $V_2$, since $V_2$ is an independent set, and
      \item $H$ contains at most one vertex from $V_1$, because otherwise $H$ would contain
        at least one vertex from $V_3$, and there is no vertex (or pair of vertices) in $V_3$ that is connected to both
        vertices in $V_1$.
    \end{itemize}

    \noindent Therefore we are left with the following four cases:

    \begin{itemize}
      \item $H$ uses only vertices from $V_3$. But in $G(V_3)$ only $z_1$ and $y_1$ have degree at most $3$.
      \item $H$ uses $3$ vertices from $V_3$ and one vertex from $V_2$. But each vertex from $V_2$ is adjacent to exactly
        two vertices from $V_3$.
      \item $H$ uses $3$ vertices from $V_3$ and one vertex from $V_1$. Then $v \not\in H$
        because it is adjacent to only $z_1$ and $y_1$. Thus $u \in H$, but it easy to verify that
        $N(u)$ does not contain a triangle.
      \item $H$ uses $2$ vertices from $V_3$, one vertex from $V_2$ and one vertex from $V_1$. But for
        each $\{s,t\} \subset V_3$, there exists exactly one triangle $\{s,t,r\}$, such that $r \in V_1 \cup V_2$.
    \end{itemize}

    \noindent We have reached contradictions in all possible cases, thus we conclude that $G$ is $K_4$-free.
  \end{proof}

  \begin{theorem}\label{thm:k4free}
    There exists a graph $G$, such that $\chi_3(G) = \ceil{\omega(G)/2}+1$, and $\omega(G)=3$.
  \end{theorem}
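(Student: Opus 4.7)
The plan is to build $G$ by attaching three copies of the $K_4$-free polar gadget along the edges of a triangle. Specifically, take three vertices $a, b, c$ and three fresh, vertex-disjoint copies $H_{ab}$, $H_{bc}$, $H_{ca}$ of the gadget from Definition \ref{def:k4free}. Form $G$ by identifying the pair $(u,v)$ of $H_{ab}$ with $(a,b)$, the pair $(u,v)$ of $H_{bc}$ with $(b,c)$, and the pair $(u,v)$ of $H_{ca}$ with $(c,a)$. The three polar edges then become the edges $ab$, $bc$, $ca$ in $G$, so $\{a,b,c\}$ is a triangle, giving $\omega(G)\geq 3$.

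The key structural observation is that the three gadgets are vertex-disjoint outside of $\{a,b,c\}$, so every edge of $G$ lies inside a single gadget copy. Consequently, any clique of $G$ must be contained in one gadget -- a clique with two internal vertices from distinct copies would require a nonexistent cross-gadget edge -- and by Lemma \ref{lma:k4free}(iii) no gadget contains a $K_4$. Hence $\omega(G)=3$. By the same disjointness, every triangle of $G$ either sits inside a single gadget or equals $\{a,b,c\}$ itself.

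For the lower bound $\chi_3(G)\geq 3$, any triangle-free 2-coloring of $G$ restricts to a triangle-free 2-coloring of each $H_{xy}$, and Lemma \ref{lma:k4free}(ii) then forces $c(a)$, $c(b)$, $c(c)$ to be pairwise distinct, which is impossible with only two colors. For the matching upper bound $\chi_3(G)\leq 3$, set $c(a)=1$, $c(b)=2$, $c(c)=3$ and extend independently into each gadget using a color-permuted copy of the 2-coloring from Lemma \ref{lma:k4free}(i): in $H_{xy}$, assign the set $\{u,z_1,y_1\}$ the color $c(x)$ and the remaining nine vertices the color $c(y)$. Lemma \ref{lma:k4free}(i) guarantees no monochromatic triangle inside any gadget, and the sole triangle spanning multiple gadgets, namely $\{a,b,c\}$, receives three distinct colors. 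This yields $\chi_3(G)=3=\lceil\omega(G)/2\rceil+1$. The only non-routine step is the cross-gadget $K_4$-exclusion, which is handled by the edge-locality observation above; the creative content lies in picking a triangle of polar edges as the skeleton.
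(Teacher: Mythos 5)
Your construction is exactly the paper's: glue three $K_4$-free polar gadgets along their polar edges to form a triangle, use Lemma \ref{lma:k4free}(i)--(ii) to force a third color and (iii) to keep $\omega(G)=3$, so the proposal is correct and takes essentially the same approach (in fact with more detail than the paper's terse proof). One phrasing slip worth fixing: the claim that every clique lies inside a single gadget is literally false for $\{a,b,c\}$ itself; the intended statement is that no $K_4$ exists, and the remaining case ($\{a,b,c\}$ plus one internal vertex) is excluded because no vertex is adjacent to all three of $a$, $b$, $c$.
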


  \begin{proof}
    To construct $G$, take three $K_4$-free polar gadgets and identify their polar edges so that they form a triangle. Properties (i) and (ii)
    of Lemma \ref{lma:k4free} forces the use of the third color in $G$ and Property (iii) makes the construction $K_4$-free.
  \end{proof}

  We believe that the gadgets that we presented in this section are interesting from the graph-theoretic point of view.
  It would be instructive to find answers to the following questions.

  \begin{question}
    For any $k \geq 2$, what is the smallest graph (in terms of vertices and/or edges) for which the following are true:
    (i) $\chi_3(G)=k$, and (ii) for any triangle-free k-coloring of $G$, there exists a {\em polar clique} of size k in $G$, i.e., a clique
    of size k where each vertex has a unique color?
  \end{question}

  \begin{question}
    Is $K_4$-free polar gadget the smallest is terms of vertices/edges/triangles graph that satisfy the properties of Lemma \ref{lma:k4free}?
  \end{question}
  
  To finish this section we give another bound on $\chi_3(G)$. We recall the well-known Brook's Theorem \cite{brooks} which
  states that for any graph $G$, we have $\chi(G) \leq \Delta(G)$,
  unless $G$ is a complete graph or an odd cycle. Using this theorem we can prove the following.

  \begin{theorem}\label{thm:delta}
    Let $G=(V,E)$ be any graph where $|V|>3$, where $G$ is not a complete graph of odd number of vertices.
    Then $\chi_3(G) \leq \ceil{\frac{\Delta(G)}{2}}$.
  \end{theorem}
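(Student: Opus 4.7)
The plan is to leverage Brook's theorem together with the standard recoloring strategy from Theorem~\ref{thm:bounds}, and then clean up the small number of exceptional cases. Specifically, Theorem~\ref{thm:bounds} already gives $\chi_3(G) \leq \lceil \chi(G)/2 \rceil$, so any upper bound on $\chi(G)$ immediately yields an upper bound on $\chi_3(G)$ after halving. Since Brook's theorem states $\chi(G) \leq \Delta(G)$ for every $G$ that is neither a complete graph nor an odd cycle, this will directly give the desired bound in the \emph{generic} case.

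First I would split into three cases. \textbf{Case 1:} $G$ is neither complete nor an odd cycle. Apply Brook's theorem to obtain $\chi(G) \leq \Delta(G)$, then apply the upper bound of Theorem~\ref{thm:bounds} to conclude $\chi_3(G) \leq \lceil \chi(G)/2 \rceil \leq \lceil \Delta(G)/2 \rceil$. This case is essentially free.

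\textbf{Case 2:} $G$ is an odd cycle $C_n$ with $n > 3$. Then $G$ is triangle-free, so $\chi_3(G) = 1$, and $\Delta(G) = 2$, giving $\lceil \Delta(G)/2 \rceil = 1$. The bound holds with equality. \textbf{Case 3:} $G$ is a complete graph $K_n$ (necessarily with $n$ even, by the hypothesis that $G$ is not a complete graph on an odd number of vertices). By Theorem~\ref{thm:bounds} we have $\chi_3(K_n) \leq \lceil \chi(K_n)/2 \rceil = n/2$, and on the other hand $\Delta(K_n) = n-1$, so $\lceil \Delta(G)/2 \rceil = \lceil (n-1)/2 \rceil = n/2$ since $n$ is even. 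Hence $\chi_3(G) \leq n/2 = \lceil \Delta(G)/2 \rceil$.

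There is no real obstacle here; the only subtlety is the bookkeeping in Case~3, where one must notice that the hypothesis excludes precisely the situation where $\chi_3(K_n) = (n+1)/2$ would exceed $\lceil (n-1)/2 \rceil = (n-1)/2$, namely $K_n$ with $n$ odd. The role of the hypothesis $|V| > 3$ is to rule out $K_3$, which would also violate the inequality (since $\chi_3(K_3)=2$ but $\lceil \Delta/2 \rceil = 1$), and this is already implicit in restricting to graphs handled by Brook's theorem in the nontrivial direction.
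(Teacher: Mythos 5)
Your proposal is correct and follows essentially the same route as the paper: handle odd cycles (length $\geq 5$) and even complete graphs separately, and apply Brooks' theorem together with the upper bound $\chi_3(G) \leq \lceil \chi(G)/2 \rceil$ of Theorem~\ref{thm:bounds} in the remaining case. The case analysis and the key observation that $\lceil n/2 \rceil \leq \lceil (n-1)/2 \rceil$ exactly when $n$ is even match the paper's argument.
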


  \begin{proof}
    If $G$ is an odd cycle of length at least 5, then $\chi_3(G)=1$ and $\ceil{\frac{\Delta}{2}}=1$, so
    the inequality holds. If $G$ is a complete graph (a clique) of $n$ vertices, then $\chi_3(G)=\ceil{\frac{n}{2}}$
    and $\Delta=n-1$. The inequality $\ceil{\frac{n}{2}} \leq \ceil{\frac{n-1}{2}}$ holds iff $n$ is even.
    If the above cases does not occur, then the application of the Brook's Theorem
    combined with the upper bound of Theorem \ref{thm:bounds} completes the proof.
  \end{proof}
  
  \section{Tractable classes of graphs}

  Here we present several classes of graphs for which efficient algorithms
  for triangle-free coloring problem exist. We use the bounds derived in the previous section
  to show that for small values of structural parameters polynomial time complexity
  can be achieved in many cases.

  Given graph $G=(V,E)$, we distinguish between two optimization variants of the triangle-free coloring problem:
  (i) finding the number $\chi_3(G)$, and (ii) finding the triangle-free coloring that uses exactly
  $\chi_3(G)$ colors, i.e., the mapping from $V$ to $\{1,\dots,\chi_3(G)\}$ is the required output.
  As we will soon see, complexities for these two variants can be different.
  To this end we propose the following notion of time complexity for our problem.
  For convenience, we say that $n$ is the number of vertices of graph $G$ and $m$ is the number of edges of $G$,
  unless stated otherwise.

  \begin{definition}
    Let $\mathcal{G}$ be a class of graphs. We say that the triangle-free coloring
    problem is solvable in time $O(f(n+m),g(n+m))$ on $\mathcal{G}$, iff there exist
    algorithms $\mathcal{A}$ and $\mathcal{B}$, such that for every input $G \in \mathcal{G}$,
    (i) algorithm $\mathcal{A}$ outputs $\chi_3(G)$ in $O(f(n+m))$ time, and (ii) algorithm $\mathcal{B}$
    outputs the triangle-free coloring that uses exactly $\chi_3(G)$ colors, in $O(g(n+m))$ time.
  \end{definition}

  \subsection{Graphs with bounded chromatic number}

  Here we present a single theorem that presents the complexity of triangle-free coloring problem on
  several popular classes of graphs. We exploit the fact that if we know in advance that $\chi(G) \leq 4$ --
  and therefore $\chi_3(G) \leq 2$, by Theorem \ref{thm:bounds} -- then $\chi_3(G)$ can be found easily and is one of the following:
  (a) $\chi_3(G)=0$ iff $G$ is empty, (b) $\chi_3(G)=1$ iff $G$ is triangle-free ($K_3$-free), (c) $\chi_3(G)=2$ iff the above two cases does not hold.
  Checking the first case is trivial. Therefore the challenge of finding $\chi_3(G)$ lies in checking whether $G$ contains a triangle.

  For finding the actual coloring, we use the fact that in the given graphs we can find ``good-enough''
  classic coloring, and then recolor vertices so that there are no monochromatic triangles using the {\em standard recoloring strategy}.

  \begin{theorem}\label{thm:positive}
    The triangle-free coloring problem is solvable:

    \begin{itemize}
      \item in time $O(n,n^2)$ on planar graphs,
      \item in time $O(n,n)$ on: outerplanar graphs, chordal graphs, $\Delta$-regular graphs, with $\Delta \leq 4$.
    \end{itemize}
  \end{theorem}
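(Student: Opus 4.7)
The unifying observation is that each of the listed classes has bounded classic chromatic number, which via Theorem~\ref{thm:bounds} forces $\chi_3(G)\le 2$ (with a few easily-recognised exceptions), so the problem reduces to two very cheap subproblems: triangle detection for $\mathcal{A}$, and a classic coloring plus the standard recoloring strategy for $\mathcal{B}$. Specifically, whenever $\chi(G)\le 4$ one has the trichotomy $\chi_3(G)=0$ iff $G$ has no edges, $\chi_3(G)=1$ iff $G$ is nonempty and triangle-free, and $\chi_3(G)=2$ otherwise; so $\mathcal{A}$ needs only to decide whether $G$ contains a $K_3$, and $\mathcal{B}$ needs only to produce a classic $4$-coloring and then pair up consecutive color classes as in the remark following Theorem~\ref{thm:bounds}.

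My plan is then to walk through each class, citing the structural bound that supplies small $\chi$ and the known linear-time algorithm that realises it. For planar graphs the Four Color Theorem yields $\chi(G)\le 4$, and planarity gives $m\le 3n-6$, so triangle detection runs in $O(n)$ via any bounded-arboricity triangle listing routine (e.g.\ Chiba--Nishizeki); for the coloring itself I would invoke the Robertson--Sanders--Seymour--Thomas $O(n^2)$ four-coloring algorithm and then apply standard recoloring, yielding the advertised $O(n,n^2)$ bound. For outerplanar graphs we have $\chi(G)\le 3$ and $m\le 2n-3$, so both triangle detection and 3-coloring are $O(n)$ (via an ear/tree decomposition), after which recoloring is immediate. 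For chordal graphs a perfect elimination ordering, computable in linear time by LexBFS, yields $\omega(G)$ together with an optimal classic coloring; since $\chi(G)=\omega(G)$ in this class, Theorem~\ref{thm:bounds} collapses to the identity $\chi_3(G)=\lceil\omega(G)/2\rceil$, so no separate triangle check is needed and the standard recoloring applied to the $\omega$-coloring finishes $\mathcal{B}$. For $\Delta$-regular graphs with $\Delta\le 4$ we have $m=O(n)$, and Brooks' theorem combined with Theorem~\ref{thm:delta} gives $\chi_3(G)\le 2$ except for the trivially recognisable $G=K_5$ and small odd cycles; a linear-time greedy/Brooks coloring with at most $\Delta$ colors followed by recoloring completes $\mathcal{B}$.

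The main obstacle is not any deep combinatorics but the bookkeeping required to make every subroutine genuinely linear under the relevant structural assumption, and to catch the degenerate corner cases ($G$ empty, $G$ an odd cycle, $G=K_4$ or $K_5$, $G$ a single triangle, and the like) before invoking the general machinery. I also want to double-check the chordal case, since on a dense chordal graph $m$ can be quadratic; the $O(n,n)$ statement should be read as linear in the input size $n+m$, and the analysis above indeed yields this, because the PEO-based algorithms and the pairing step both run in $O(n+m)$. Once these ingredients are assembled the theorem is an immediate consequence of Theorem~\ref{thm:bounds} and the standard recoloring remark.
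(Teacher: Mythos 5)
Your proposal is correct and follows essentially the same route as the paper: use Theorem~\ref{thm:bounds} (plus Theorem~\ref{thm:delta}, and perfection for chordal graphs) to pin $\chi_3$ down to the trichotomy, reduce computing $\chi_3$ to linear-time triangle detection, and produce the coloring from a known classic-coloring algorithm followed by the standard recoloring strategy, with only interchangeable choices of subroutines (Chiba--Nishizeki instead of Papadimitriou--Yannakakis, a Brooks-type coloring instead of Skulrattanakulchai's). If anything, you are slightly more careful than the paper in the $\Delta\le 4$ regular case, where $K_5$ components (with $\chi_3=3$) are an exception the paper's text glosses over.
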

  
  We first give the required definitions, then, the proof.
    
  We say that the graph is {\em planar}, if we can embed it in the plane,
  i.e., it can be drawn on the plane in such a way that its edges intersect
  only at their endpoints. Graph $G$ is {\em outerplanar} if it is planar and there exists a plane
  embedding of $G$ in which all vertices lie on the exterior (unbounded) face.
  Graph is {\em chordal} if every cycle of length greater than 3 has a chord -- an edge
  connecting two non-adjacent vertices on a cycle.
  Graph is {\em $\Delta$-regular}, if its every vertex has degree $\Delta$.

  \begin{proof}[Proof of Theorem \ref{thm:positive}]

    We begin with chordal graphs. Chordal graphs are the part of the larger class called perfect graphs.
    A perfect graph $G$ has a property $\chi(G)=\omega(G)$, therefore
    $\chi_3(G)=\ceil{\frac{\chi(G)}{2}}=\ceil{\frac{\omega(G)}{2}}$, by Theorem \ref{thm:bounds}. This means that
    determining $\chi_3(G)$ is equivalent to determining either $\omega(G)$ or $\chi(G)$.
    Fortunately, this can be done in linear-time in chordal graphs \cite{golumbic}. Moreover, not only can
    we find $\chi(G)$ in linear-time, but also the (classic) coloring that uses $\chi(G)$
    colors. Using standard recoloring strategy, we can produce a triangle free coloring that uses
    $\ceil{\frac{\chi(G)}{2}}=\chi_3(G)$ colors.

    We now turn out attention to other classes of graphs. First, we show that we can find whether
    given graph $G$ is triangle-free in $O(n)$ time, for the following cases:

    \begin{itemize}
      \item {\em planar graphs} - using linear-time algorithm by Papadimitriou and Yannakakis \cite{papadimitriou},
      \item {\em outerplanar graphs} - same as above,
      \item {\em $\Delta$-regular graphs} - we loop through all the vertices, and for every vertex $v$,
        check if any pair of vertices from $N(v)$ are adjacent to each other. This takes at most $O(n \cdot \Delta^2)=O(n)$ time.
    \end{itemize}

    \noindent Hence, we can check if $\chi_3(G) > 1$ in linear-time (checking if $\chi_3(G) = 0$ is trivial).
    Furthermore, from Theorems \ref{thm:bounds} and \ref{thm:delta} we know that $\chi_3(G) \leq 2$ in all three cases, therefore
    determining $\chi_3(G)$ can be done in time $O(n)$.

    From now on we assume that the input graph contains at least one triangle.
    Otherwise, we can output a trivial coloring $c(V)=1$.

    Finding the actual coloring of $G$ has different time complexity depending
    on the class of $G$. First, we find a ``good-enough'' classic coloring of $G$:

    \begin{itemize}
      \item {\em planar graphs} - The well-known Four Color Theorem states
        that every planar graph is classically 4-colorable, that is, for each planar $G$, $\chi(G) \leq 4$.
        The theorem was proved by Appel and Haken in \cite{appel}, where the authors propose a quadratic-time
        algorithm based on their proof. The algorithm was later improved by Robertson et al. \cite{4coloring},
        but the asymptotic running time is still $O(n^2)$.
      \item {\em outerplanar graphs} -  Proskurowski and Sys{\l}o \cite{syslo}
        give an algorithm for finding $\chi(G)$ on outerplanar graphs which runs in
        $O(n)$ time.
      \item {\em $\Delta$-regular graphs (where $\Delta \leq 4$)} - Skulrattanakulchai \cite{delta-list} shows
        how to find classic $\Delta$-coloring in $O(n)$ time.
    \end{itemize}

    Given the classic coloring of $G$, we use the standard recoloring strategy to obtain the final coloring.
    Since we know that $\chi(G) \leq 4$ and $\chi(G) \geq 3$ (by the assumption that is contains at least one triangle),
    then $\chi_3(G) = 2$ and the resulting coloring is optimal.
  \end{proof}

  We do not know if there exists a sub-quadratic algorithm for finding a triangle-free
  coloring in planar graphs, so we leave it as an open problem. It is worth noting
  that the situation would not improve even if we knew that the the input graph is classically
  3-colorable, as the best know algorithm for classically 4-coloring planar graphs with this property
  still runs in $O(n^2)$ time \cite{kawarabayashi}.

  \begin{question}
    Is there an algorithm that finds a triangle-free coloring of a planar graph in time $o(n^2)$?
  \end{question}

  \begin{remark}
    Theorem \ref{thm:positive} serves as an evidence, that the triangle-free coloring problem is
    easier than the classic coloring problem. We have efficient algorithms for finding $\chi_3$ on
    planar graphs and graphs with $\Delta=4$, but it is $\mathcal{NP}$-hard to find the $\chi$ even on
    4-regular planar graphs \cite{4regplanar}. This suggests that the above question might have
    a positive answer.
  \end{remark}
  
  \subsection{FPT result}

  In this subsection we look at our problem from the parametrized complexity point of view.
  The main goal here is not to give efficient algorithms, but rather provide the initial
  classification of the problem in the context of $\mathcal{FPT}$ theory. We prove that
  \textsc{TriangleFree}-q-\textsc{Coloring} problem is fixed-parameter tractable when
  parametrized by vertex cover number.
  
  In the article by Fiala, Golovach and Kratochv\'il \cite{fiala}, authors explore differences
  in the complexity of several coloring problems when parametriazed by either vertex cover number or treewidth.
  We use techniques similar to theirs in the proof of the following theorem.
  
  \begin{theorem}\label{thm:vc}
    The \textsc{TriangleFree-}q\textsc{-Coloring} problem is $\mathcal{FPT}$ when
    parametrized by the vertex cover number.
  \end{theorem}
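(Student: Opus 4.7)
The plan is to apply the standard vertex-cover branching paradigm. Let $W \subseteq V$ be a minimum vertex cover of $G$, and set $k=|W|$. Using any of the classical fixed-parameter algorithms for \textsc{VertexCover} one can compute $W$ in time $O(c^k + \mathrm{poly}(n,m))$ for some constant $c$, so this step is already FPT in $k$.

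The key structural observation I would exploit is that $I := V \setminus W$ is an independent set, hence every triangle of $G$ falls into exactly one of two types: (type A) all three vertices lie in $W$, or (type B) exactly one vertex lies in $I$ and its two partners lie in $N(v) \cap W$ and are adjacent to each other. Triangles with two vertices in $I$ cannot exist. This dichotomy means that once the colors on $W$ are fixed, the monochromatic-triangle constraints on different vertices of $I$ become completely independent of one another.

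The algorithm then branches over all $q^{k}$ assignments $c \colon W \to \{1,\dots,q\}$. For each branch: (i) verify in $O(k^{3})$ time that no type-A triangle is monochromatic under $c$; (ii) for every $v \in I$, compute the set of \emph{forbidden} colors $F(v) = \{\,i : \exists u_1u_2 \in E(G[N(v)\cap W])\ \text{with}\ c(u_1)=c(u_2)=i\,\}$, which takes time $O(k^{2})$ per vertex; (iii) extend $c$ to $v$ by picking any color in $\{1,\dots,q\} \setminus F(v)$, if one exists. The extension succeeds on this branch iff $F(v) \neq \{1,\dots,q\}$ for every $v \in I$, and by the dichotomy above this yields a valid triangle-free $q$-coloring of $G$. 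We answer \textbf{yes} iff some branch succeeds. The total running time is $O(c^{k} + q^{k} \cdot (k^{3} + nk^{2}))$, which is of the form $f(k)\cdot \mathrm{poly}(n)$ for fixed $q$, establishing the FPT claim.

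I expect no significant technical obstacle here: the main point to get right is the case analysis justifying that type-B triangles are the \emph{only} way an extension can fail, so that the per-vertex greedy choice on $I$ is safe. Everything else is routine bookkeeping and an invocation of a known vertex-cover FPT algorithm.
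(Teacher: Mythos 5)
Your proposal is correct and follows essentially the same route as the paper: enumerate the colorings of a vertex cover $W$ and greedily extend each one to the independent set $V \setminus W$, which is sound precisely because every triangle has at most one vertex outside $W$, so the constraints on the vertices of $I$ decouple once $W$ is colored. The only real difference is that the paper adds a shortcut (when $q > \lceil |W|/2 \rceil$ a triangle-free $q$-coloring always exists, so enumeration is only needed for small $q$), giving a bound in terms of $k$ alone, while your $q^{k}$ enumeration is equally valid since $q$ is a fixed constant of the problem.
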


  \begin{proof}
    Let $W$ be a minimum vertex cover of $G$, and let $|W|=k$. Then $I = V \setminus W$ is an independent set. The
    goal is to find a triangle-free coloring $c : V \rightarrow \{1,\dots,q\}$.

    The algorithm works in two steps: first, we find the triangle-free q-coloring of $W$
    by exhaustive search. Then, since $I$ is an independent set, we can use greedy strategy
    to color its vertices, once we know the coloring of $W$. By greedy strategy we mean taking
    each $v \in I$, checking its colored neighborhood $N(v)$ (notice that $N(v) \subseteq W$), and coloring
    $v$ in a way which does not create a monochromatic triangle in $G$. Clearly this procedure will find
    a triangle-free q-coloring iff such coloring exists. The analysis of running time now follows.

    Since $\chi_3(W) \leq \ceil{\chi(W)/2} \leq \ceil{k/2}$, it is natural to consider the following two cases:

    (i) If $q \leq \ceil{k/2}$, then we consider all q-colorings of $W$ and their extensions to $I$
    by the greedy algorithm. Since $W$ has at most $k^{\ceil{k/2}}$ colorings, the running time is
    $O(k^{\ceil{k/2} + 1}n)$.

    (ii) If $q > \ceil{k/2}$, then assuming that $W=\{w_0,\dots,w_{k-1}\}$, we can set $c(w_{2i})=c(w_{2i+1})=i+1$,
    for each $0 \leq i < \ceil{k/2}$, which clearly uses $\ceil{k/2}$ colors, then set $c(I)=\ceil{k/2}+1$. Thus,
    in this case, the triangle-free q-coloring always exists, and we can find it in $O(n)$ time.

    We conclude that the total running time of the algorithm is $O(k^{\ceil{k/2} + 1}n)$, which proves the claim.
  \end{proof}
  
  \section{Hardness results}

  Here we improve the results of Karpi\'nski \cite{karpinski} and Shitov \cite{shitov}. In their papers they show that the
  generalized version of \textsc{TriangleFree}-2-\textsc{Coloring} problem is $\mathcal{NP}$-complete -- they not only consider monochromatic triangles,
  but monochromatic cycles of arbitrary (fixed) length. Karpi\'nski first presented the proof, which was then simplified by Shitov, by introducing
  an auxiliary problem, which we call the \textsc{TriangleFreePolar}-2-\textsc{Coloring} problem. In Shitov's proof, this new problem
  serves as a connection between \textsc{NotAllEqual 3-SAT} and \textsc{TriangleFree}-2-\textsc{Coloring}.

  Our contribution is as follows: we first prove that the triangle-free coloring remains
  $\mathcal{NP}$-hard even if we can use any fixed number of colors. Then we show that the 
  \textsc{TriangleFree}-2-\textsc{Coloring} problem is $\mathcal{NP}$-hard in the restricted class
  of graphs, which do not contain $K_4$ as a subgraph. These improve the results of Karpi\'nski \cite{karpinski}.
  Lastly, we show that the \textsc{TriangleFreePolar}-2-\textsc{Coloring} remains $\mathcal{NP}$-hard even in graphs
  of maximum degree 3, which improves the result of Shitov \cite{shitov}.
  
  \begin{theorem}\label{thm:hard}
    For any fixed $q \geq 2$, the \textsc{TriangleFree}-q-\textsc{Coloring} problem is $\mathcal{NP}$-hard.
  \end{theorem}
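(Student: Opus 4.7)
The plan is to reduce from \textsc{TriangleFree}-$2$-\textsc{Coloring}, which is known to be $\mathcal{NP}$-hard by Karpi\'nski \cite{karpinski}; for $q = 2$ the claim is immediate, so I focus on $q \geq 3$. Given an instance $G=(V,E)$ of \textsc{TriangleFree}-$2$-\textsc{Coloring}, the construction adjoins to $G$ a set $U$ of $2q-2$ fresh vertices that forms a clique, together with all edges between $U$ and $V$, to obtain $G'$. Since $q$ is a fixed constant, this adds only $O(|V|)$ edges and the reduction is polynomial.

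The claim to prove is that $G$ admits a triangle-free $2$-coloring if and only if $G'$ admits a triangle-free $q$-coloring. The forward direction is straightforward: given a triangle-free $2$-coloring $c : V \to \{q-1, q\}$, I would extend it by pairing up the $2q-2$ vertices of $U$ and assigning color $i$ to the $i$-th pair for $i \in \{1, \ldots, q-2\}$. Any monochromatic triangle in $G'$ would have to lie entirely in $G$ (excluded by $c$), entirely in $U$ (each color used only twice in $U$), or straddle the boundary (impossible because the color palettes for $V$ and $U$ are disjoint).

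The backward direction is where the real work is. The key observation is that whenever a color $\alpha$ is used on two vertices $u,u' \in U$, no vertex $v \in V$ can be colored $\alpha$, since $\{v,u,u'\}$ would otherwise be a monochromatic triangle (all three edges exist by construction). In the clique on $U$ no color can appear more than twice; letting $a$ count colors used exactly twice in $U$ and $b$ count colors used exactly once, the identities $2a + b = 2q - 2$ and $a + b \leq q$ yield $a \geq q-2$ upon subtraction. Therefore at most two colors remain available for $V$, and the restriction of any triangle-free $q$-coloring of $G'$ to $V$ is a triangle-free coloring of $G$ using at most two colors, establishing the equivalence.

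The delicate point, and the only real design choice, is the size of the clique $U$. The value $2q-2$ is tight: it forces at least $q-2$ colors to appear twice in $U$, pinning $V$ down to at most two usable colors, while still permitting a valid extension whenever $G$ has a triangle-free $2$-coloring. A smaller clique would leave three or more colors free for $V$ and the backward implication would break, so the counting argument above is the essential ingredient of the proof; everything else is bookkeeping.
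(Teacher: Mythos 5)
Your backward direction is sound --- the counting $2a+b=2q-2$, $a+b\leq q$, hence $a\geq q-2$, correctly shows that in any triangle-free $q$-coloring of $G'$ at most two colors can appear on $V$, and the restriction to $V$ is then a triangle-free $2$-coloring of $G$. The gap is in the forward direction, and it is fatal to the construction, not just to the write-up. Your own extension already betrays it: pairing the $2q-2$ vertices of $U$ with colors $1,\dots,q-2$ colors only $2q-4$ of them, and the two leftover vertices have nowhere to go. If the palettes of $U$ and $V$ are to be disjoint, the clique $U$ of size $2q-2$ needs at least $q-1$ colors (each color fits at most twice in a clique) but only $q-2$ are left; and you cannot let a vertex $u\in U$ borrow a color from $V$'s palette, because $u$ is adjacent to \emph{all} of $V$, so $u$ together with any monochromatic edge of $G$ in that color forms a monochromatic triangle --- and a triangle-free $2$-coloring of $G$ is not proper, so such edges generally exist in both colors. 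In fact the reduction itself is false: take $G=K_4$ and $q=3$. Then $G$ is triangle-free $2$-colorable, but $G'$ is the complete join of $K_4$ with a $K_4$, i.e.\ $K_8$, and $\chi_3(K_8)=\lceil 8/2\rceil=4>3$, so $G'$ is not triangle-free $3$-colorable. The size $2q-2$ cannot be tuned to fix this: any clique completely joined to $V$ that is large enough to kill $q-2$ colors on $V$ is too large to be colored outside $V$'s palette.

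The paper avoids exactly this trap. It proceeds by induction on $q$, reducing \textsc{TriangleFree}-$q$-\textsc{Coloring} to \textsc{TriangleFree}-$(q{+}1)$-\textsc{Coloring}: each original vertex $v_i$ gets its \emph{own} $(q+1)$-cycle-clique gadget, all gadgets sharing a single apex vertex $u$ placed in the same joint as $v_i$, so that Lemma \ref{lma:diffcolors} forces $c(v_i)\neq c(u)$ and thereby bans one color (that of $u$) from all of $V$. Because each gadget touches only the single original vertex $v_i$ (besides $u$), no gadget vertex is adjacent to both endpoints of an edge of $G$, so the forward extension --- give $u$ the fresh color and rainbow-color each gadget around it --- never collides with monochromatic edges of $G$. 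If you want to salvage a direct (non-inductive) reduction from the $2$-coloring case, you would need a gadget with this same locality property that forbids $q-2$ colors simultaneously; a clique joined to everything does not have it.
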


  \begin{proof}
    We prove the theorem by induction on $q$. For $q=2$ the hardness has already been proved by Karpi\'nski \cite{karpinski}.
    What remains is to show the polynomial reduction from the \textsc{TriangleFree}-q-\textsc{Coloring} problem to
    \textsc{TriangleFree}-(q+1)-\textsc{Coloring} problem, for any fixed $q \geq 2$. Take any graph $G=(V,E)$,
    where $|V|=n$ and $|E|=m$, and assume that the vertices are arbitrarily ordered, i.e., $V=\{v_1,\dots,v_n\}$.
    We construct the graph $G'=(V',E')$ in the following way:

    \begin{enumerate}
    \item Let $V' = V \cup \left(\bigcup_{i=1}^n V_i\right)$ and $E' = E \cup \left(\bigcup_{i=1}^n E_i\right)$,
      where $V_i = V^0_i \cup V^1_i \cup V^2_i \cup V^3_i \cup V^4_i$ and $(V_i,E_i)$
      is a $(q+1)$-cycle-clique with the set $\{V^0_i, V^1_i, V^2_i, V^3_i, V^4_i\}$ being its joints.
    \item For every $1 \leq i \leq n$, pick arbitrary vertex from $V^0_i$ and call it $u_i$. Identify the pairs of vertices in order:
      $\tuple{u_1,u_2},\dots,\tuple{u_1,u_n}$. Rename $u_1$ to $u$.
    \item For every $1 \leq i \leq n$, take any $s \in V^0_i$ such that $s \neq u$ and identify vertices $\tuple{v_i,s}$.
    \end{enumerate}

    Observe that $v_iu$ is an edge in $V^0_i$ (for $1 \leq i \leq n$) and therefore in any triangle-free $(q+1)$-coloring of $G'$, $v_i$
    has a different color than $u$, by Lemma \ref{lma:diffcolors}. It is now easy to verify that $G$ is triangle-free $q$-colorable iff
    $G'$ is triangle-free $(q+1)$-colorable. Also, since $q$ is part of the problem (and therefore a constant), the reduction is polynomial
    w.r.t. $n$ and $m$.
  \end{proof}

  \begin{theorem}\label{thm:hardk4}
    The \textsc{TriangleFree}-2-\textsc{Coloring} problem is $\mathcal{NP}$-hard on $K_4$-free graphs.
  \end{theorem}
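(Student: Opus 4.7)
The plan is to reduce from \textsc{NotAllEqual 3-SAT} (equivalently, from the $\mathcal{NP}$-hard \textsc{TriangleFreePolar}-2-\textsc{Coloring} problem of Shitov \cite{shitov}) using the $K_4$-free polar gadget of Definition \ref{def:k4free} as the sole polarity-encoding building block. From an NAE-3-SAT instance $\phi$ with variables $x_1,\dots,x_n$ and clauses $c_1,\dots,c_m$, I would construct a graph $G_\phi$ by introducing two vertices $T_i,F_i$ per variable joined by a fresh copy of the gadget (so $T_i$ and $F_i$ must receive opposite colors, by Lemma \ref{lma:k4free}(ii)); three fresh vertices $v_{j,1},v_{j,2},v_{j,3}$ per clause, joined into a triangle; and one further copy of the gadget between each $v_{j,k}$ and $F_i$ if $l_{j,k}=x_i$, or between $v_{j,k}$ and $T_i$ if $l_{j,k}=\bar{x}_i$. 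This forces each $v_{j,k}$ to carry the color that represents the truth value of $l_{j,k}$.

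Correctness is then almost immediate: by Lemma \ref{lma:k4free}(i), any consistent 2-coloring of the polar endpoints extends to a triangle-free 2-coloring inside every gadget, while a clause triangle is non-monochromatic precisely when the corresponding clause is NAE-satisfied. Hence $G_\phi$ is triangle-free 2-colorable iff $\phi$ is an NAE-yes instance, and the construction is clearly polynomial-time computable.

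The main obstacle, and the only technically nontrivial part of the argument, is verifying that $G_\phi$ is $K_4$-free. The skeleton formed by the clause triangles uses pairwise disjoint triples of vertices, so it harbours no $K_4$; and each gadget is itself $K_4$-free by Lemma \ref{lma:k4free}(iii). A potential $K_4$ in $G_\phi$ must therefore straddle the boundary between a gadget and the rest of the graph. The key observation is that distinct gadgets share at most a common polar endpoint, and every internal vertex of a gadget has all its neighbors either among the gadget's other internals or equal to its two polar endpoints. A short case analysis, splitting on how many of the four putative $K_4$ vertices lie inside a given gadget versus outside, rules out each configuration: any three mutually adjacent neighbors of a shared polar endpoint would have to live in a single gadget, contradicting Lemma \ref{lma:k4free}(iii). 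I expect this case check — careful but essentially routine — to be where the proof concentrates its effort.
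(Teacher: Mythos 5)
Your construction is essentially identical to the paper's own proof: a reduction from \textsc{NotAllEqual 3-Sat} with a triangle per clause, vertices $t_x,f_x$ per variable, and every truth-forcing edge realized as the polar edge of a fresh $K_4$-free polar gadget, with correctness and $K_4$-freeness argued exactly via Lemma \ref{lma:k4free}. The only point to add is the paper's harmless preprocessing assumption that no clause consists of three copies of the same literal (such clauses are trivially not NAE-satisfiable); this is also needed in your $K_4$-freeness case analysis, since a clause $(x\vee x\vee x)$ would make its triangle together with $F_i$ a $K_4$ lying outside any single gadget.
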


  \begin{proof}
    We create a polynomial reduction from the following problem, which is known to be $\mathcal{NP}$-complete \cite{naesat}:

    \vspace{\baselineskip}
    
    \noindent \textsc{NotAllEqual 3-Sat}

    \noindent {\em Input:} Boolean formula $\phi$ in conjunctive normal form, where each clause consists of exactly
    three literals.

    \noindent {\em Question:} Does $\phi$ have a nae-satisfying assignment, i.e., in each clause at least one literal is true
    and at least one literal is false?

    \vspace{\baselineskip}

    Let $\phi$ be a Boolean formula as described above, with $n$ variables $X$ and $m$ clauses $C_1,\dots,C_m$.
    We assume, without loss of generality, that each clause consists of at least two unique literals, otherwise
    the instance is trivially false. We show a construction of $K_4$-free graph $G=(V,E)$,
    such that $\phi$ is nea-satisfiable iff there exist a triangle-free 2-coloring of $G$.
    
    \begin{enumerate}
    \item For every clause $C_i \equiv (l^i_1 \vee l^i_2 \vee l^i_3)$, $1 \leq i \leq m$,
      add vertices $l^i_1$, $l^i_2$, and $l^i_3$ to $V$, and add edges $\{l^i_1,l^i_2\}$, $\{l^i_2,l^i_3\}$ and $\{l^i_3,l^i_1\}$ to $E$.
    \item For every variable $x \in X$, add vertices $t_x$ and $f_x$ to $V$, and edge $\{t_x, f_x\}$ to $E$.
    \item For every variable $x$ and every occurrence of literal $l \equiv \neg x$ in $\phi$, add edge $\{t_x,l\}$ to $E$.
      For every variable $x$ and every occurrence of literal $l' \equiv x$ in $\phi$, add edge $\{f_x,l'\}$ to $E$.
    \item For every edge $e \in E$, except for edges added in step 1, create $K_4$-free polar gadget (Definition \ref{def:k4free}),
      with $e$ begin the polar edge.
    \end{enumerate}

    Observe that introducing polar gadgets in step 4 forces every vertex corresponding to the literal $l$ to have to same color, and
    also every vertex corresponding to literal $\neg l$ to have the same color, but different than the color of $l$'s. Therefore the nae-satisfying assignment
    of each clause can be obtained from the coloring of each triangle introduced in step 1. Note that $K_4$-free polar gadget is 2-colorable, by Lemma \ref{lma:k4free}.
    From the above observations, we can conclude that $\phi$ has a nae-satisfying assignment iff $G$ is triangle-free 2-colorable.
    Also, the reduction is polynomial w.r.t. $n$ and $m$, and $G$ is $K_4$-free, since every polar gadget used is $K_4$-free, by Lemma \ref{lma:k4free}.
  \end{proof}

  \begin{theorem}\label{thm:polar}
    The \textsc{TriangleFreePolar}-2-\textsc{Coloring} problem is $\mathcal{NP}$-hard on graphs with maximum degree at most 3, but it
    is linear-time solvable on graphs with maximum degree at most 2.
  \end{theorem}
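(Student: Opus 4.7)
Every graph with $\Delta(G)\le 2$ is a disjoint union of isolated vertices, paths and cycles, so I would treat each component independently. A path (or an isolated vertex) contains no triangle, and a single BFS along the polar edges always produces a valid $2$-coloring because paths are bipartite. A cycle of length at least $4$ likewise contains no triangle, and propagating a colour around it succeeds iff the number of polar edges on the cycle is even---one linear scan per component. For a $3$-cycle the triangle condition also fires; a short case analysis on the number (0, 1, 2, or 3) of polar edges shows that the component is feasible iff at most two of its edges are polar. Summing the local work over all components gives the desired $O(n)$ algorithm.

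\textbf{Hardness part.} For the sub-cubic hardness I would reduce from \textsc{NotAllEqual 3-Sat}, adapting the gadgetry of Theorem~\ref{thm:hardk4} so that every vertex ends up with degree at most~$3$. Given $\phi$ with clauses $C_1,\dots,C_m$, I first build a clause triangle on three non-polar edges for each $C_i=(\ell^i_1\vee\ell^i_2\vee\ell^i_3)$; such a triangle is non-monochromatic iff $C_i$ is NAE-satisfied. To encode consistency among occurrences of a variable $x$ (with $p$ positive and $q$ negative occurrences), the two high-degree vertices $t_x,f_x$ used in Theorem~\ref{thm:hardk4} would be replaced by a \emph{polar spine}: a path $s^x_1-s^x_2-\dots-s^x_{2\max(p,q)}$ with every edge polar. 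A polar path forces alternating colours, so in any valid $2$-coloring all odd-indexed spine vertices share one colour and all even-indexed ones share the opposite. I would then attach each positive occurrence of $x$ to a distinct unused odd-indexed spine vertex by a polar edge, and each negative occurrence to a distinct unused even-indexed spine vertex by a polar edge; these attachments force all positive copies of $x$ to share a common colour and all negative copies to take the opposite one.

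\textbf{Sanity checks and main obstacle.} A direct count shows that every literal vertex has two triangle edges plus one spine attachment (degree $3$) and every spine vertex has at most two spine edges plus at most one literal attachment (degree at most~$3$), so $\Delta(G')\le 3$. Because the spine is acyclic and each literal vertex is attached to exactly one spine vertex, any triangle of $G'$ must lie entirely inside a clause gadget; thus the only triangles are the clause triangles, and a triangle-free polar $2$-coloring of $G'$ corresponds exactly to an NAE-satisfying assignment of $\phi$ read off from the spine colours. The main obstacle is engineering a single gadget that simultaneously (i) enforces equality of all positive copies of a variable, (ii) enforces inequality with the negative copies, (iii) introduces no extra triangles, and (iv) respects the degree cap; the polar-spine construction is designed precisely to meet all four requirements at once, with the alternating-colour property of a polar path supplying the consistency mechanism while the ``one literal per spine vertex'' attachment rule handles the degree accounting.
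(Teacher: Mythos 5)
Your hardness half is correct, but your linear-time half contains a genuine error. For a component that is a cycle of length at least $4$ you claim feasibility holds ``iff the number of polar edges on the cycle is even.'' This is wrong, because a non-polar edge on such a cycle imposes \emph{no} constraint at all (there are no triangles, and non-polar edges are unconstrained in \textsc{TriangleFreePolar}-2-\textsc{Coloring}); your parity test implicitly treats non-polar edges as equality constraints. Concretely, a $C_4$ with exactly one polar edge has an odd number of polar edges and would be rejected by your test, yet it is trivially $2$-colorable (make the two endpoints of the polar edge differ and color the rest arbitrarily). The correct characterization, which is what the paper uses, is that a component with $\Delta\le 2$ is infeasible iff it is an \emph{odd} cycle in which \emph{every} edge is polar: if even one edge is non-polar you can cut the cycle there and propagate along the remaining path, and the $3$-cycle case (where your analysis ``feasible iff at most two edges are polar'' is fine) is subsumed by the same statement. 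Checking for an all-polar odd cycle is the linear-time test; with that fix the easy direction goes through.

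The $\mathcal{NP}$-hardness argument is sound and takes a genuinely different route from the paper. The paper reduces from \textsc{NotAllEqual 3-Sat-4} (bounded variable occurrences, whose $\mathcal{NP}$-completeness it must prove separately in an appendix) and enforces consistency of a variable's occurrences with two polar complete binary trees of height $2$ joined at their roots, attaching occurrences to the four leaves. Your polar spine --- a path with all edges polar, whose length scales with the number of occurrences, with positive occurrences attached by polar edges to distinct odd-indexed vertices and negative ones to distinct even-indexed vertices --- achieves the same forcing (alternation on the spine makes all positive copies equal and opposite to all negative copies), keeps every degree at most $3$, and introduces no new triangles since each literal vertex meets the spine in a single vertex and the spine is acyclic. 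Because the gadget grows with the occurrence count, you can reduce directly from \textsc{NotAllEqual 3-Sat}, which dispenses with the bounded-occurrence variant and its appendix proof; this is a legitimate simplification, at the harmless cost of variable gadgets of non-constant size.
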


  \begin{proof}
    For hardness part, we create a polynomial reduction from the following problem,
    which we have proved to be $\mathcal{NP}$-complete (see Appendix \ref{apx:nae}):

    \vspace{0.3\baselineskip}
    
    \noindent \textsc{NotAllEqual 3-Sat-4}

    \noindent {\em Input:} Boolean formula $\phi$ in conjunctive normal form, where each clause consists of exactly
    three literals and every variable appears at most four times in the formula.

    \noindent {\em Question:} Does $\phi$ have a nae-satisfying assignment, i.e., in each clause at least one literal is true
    and at least one literal is false?

    \vspace{0.3\baselineskip}

    Let $\phi$ be a Boolean formula as described above, with $n$ variables $X$ and $m$ clauses $C_1,\dots,C_m$.
    We show a construction of graph $G=(V,E)$ and $S \subseteq E$, such
    that the maximum vertex degree in $G$ is at most 3, and $\phi$ is nae-satisfiable iff there exist a
    triangle-free 2-coloring of $G$ with the set $S$ of polar edges.

    \begin{enumerate}
      \item For every clause $C_i \equiv (l^i_1 \vee l^i_2 \vee l^i_3)$, $1 \leq i \leq m$,
        add vertices $l^i_1$, $l^i_2$, and $l^i_3$ to $V$, and add edges $\{l^i_1,l^i_2\}$, $\{l^i_2,l^i_3\}$ and $\{l^i_3,l^i_1\}$ to $E$.
      \item For every variable $x \in X$ and $\alpha \in \{t,f\}$, let $B^{\alpha}_x = (V^{\alpha}_x, E^{\alpha}_x)$, where
        $V^{\alpha}_x = \{\alpha^1_x,\dots, \alpha^7_x\}$ and
        $E^{\alpha}_x = \left\{ \{\alpha^i_x, \alpha^{2i}_x\}, \{\alpha^i_x, \alpha^{2i+1}_x\} \, \middle| \, i \in \{1,2,3\} \right\}$, i.e.,
        $B^{\alpha}_x$ is a complete binary tree of height 2, with $\alpha^1_x$ being the root, and $\{\alpha^4_x,\alpha^5_x, \alpha^6_x,\alpha^7_x \}$
        the set of leafs. For each $x \in X$ add $V^{t}_x \cup V^{f}_x$ to $V$ and $S_1 = E^{t}_x \cup E^{f}_x \cup \{t^1_x,f^1_x\}$ to $E$.
      \item For every $x \in X$, do the following: take every occurrence of literal $x$ (let's say it occurs $p \leq 4$ times)
        and order their corresponding vertices in $V$ in an arbitrary way: $x^t_1,\dots,x^t_p$. Do similar with literals $\neg x$,
        and let the resulting sequence be $x^f_1,\dots,x^f_q$, for some $q \leq 4$.
        Add edges $S_2 = \left\{ \{x^t_i,t^{i+3}_x\} \, \middle| \, 1 \leq i \leq p \right\} \cup \left\{ \{x^f_i,f^{i+3}_x\} \, \middle| \, 1 \leq i \leq q \right\}$ to $E$.
        Here, we have connected all occurrences of positive literals $x$ to the distinct leafs of $B^t_x$, and all occurrences of negative literals $\neg x$ to the
        distinct leafs of $B^f_x$.
      \item Let $S=S_1 \cup S_2$.
    \end{enumerate}

    It is easy to verify that every vertex in $G$ has degree at most 3, and that we can construct $G$ in time polynomial w.r.t. $n$ and $m$.
    The correctness of the reduction now follows.

    Assume that $\phi$ has a satisfying assignment $A : X \rightarrow \{true, false\}$,
    such that in each clause at least one literal is true
    and at least one literal is false. For every variable $x$ and every
    vertex $x^t_i$ and $x^f_i$, set $c(x^t_i)=1$ and $c(x^f_i)=2$, if $A(x) = true$,
    and $c(x^t_i)=2$ and $c(x^f_i)=1$, if $A(x) = false$. With this partial coloring we
    eliminated any possibility of there being any monochromatic triangle in $G$. We extend
    this coloring to the rest of the graph. Notice that by the choice of $S$, we have ensured
    that in every coloring of $B^{\alpha}_x$,
    $c(\alpha^1_x) \neq c(\{\alpha^2_x, \alpha^3_x\}) \neq c(\{\alpha^4_x, \alpha^5_x, \alpha^6_x, \alpha^7_x\})$.
    Furthermore, since we have introduced the edge $\{t^1_x,f^1_x\}$, the coloring of $B^{t}_x$ is simply a reflection
    of the coloring of $B^{f}_x$, in particular,
    $c(\{t^4_x, t^5_x, t^6_x, t^7_x\}) \neq c(\{f^4_x, f^5_x, f^6_x, f^7_x\})$. Therefore to properly color the vertices
    of $G$ (to satisfy the choice of $S$), we set $c(t^1_x)=2$ if $A(x)=true$ and $c(t^1_x)=1$ if $A(x)=false$. The
    coloring can then be extended to the rest of $B^t_x$ and to $B^f_x$ in a straightforward way.

    Now assume that there exists a triangle-free 2-coloring of $G$. Then by the coloring property of $B^t_x$ and $B^f_x$ mentioned
    earlier, every occurrence of literal $x$ will have the same color, let's say 1, and every occurrence of $\neg x$ will be colored
    2. Since each triangle corresponding to the clause $C_i \equiv (l^i_1 \vee l^i_2 \vee l^i_3)$ uses two colors, we can build the
    satisfying assignment $A$ of $\phi$, such that in each clause at least one literal is true
    and at least one literal is false, by setting $A(l)=true$ if $c(l)=1$ and $A(l)=false$ if $c(l)=2$.

    When $\Delta(G) \leq 2$, then $G$ is a union of isolated vertices, paths and/or cycles. Therefore $G$ is not triangle-free 2-colorable
    iff $G$ contains an odd cycle in which every edge is polar. The existence of such cycle can be checked in linear time.
    
  \end{proof}
  
  \section{Conclusions and Future Work}  

  In this paper we have shown several results regarding vertex coloring without monochromatic triangles.
  Many new interesting problems can be derived from this new coloring variant.
  We have already asked a few questions throughout this paper,
  in particular, it would be instructive to see the answer to Conjecture \ref{con:one}.
  Apart from that, we propose a handful of ways to extend our research:

  \begin{itemize}
    \item For positive results one can look for an algorithm that finds $\chi_3(G)$ in graphs with
      $\Delta(G) \geq 5$. Using an algorithm from \cite{delta-list} when $\Delta(G) = 5$
      can only get us classic 5-coloring of $G$, and therefore using standard recoloring strategy
      may not produce an optimal solution. This issue requires more complicated algorithmic approach.
    \item For negative side, one can look for the smallest $\Delta(G)$, for which the
      \textsc{TriangleFree}-q-\textsc{Coloring} problem is $\mathcal{NP}$-hard.
    \item In the context of parametrized complexity we have only shown two results. Constructing algorithms for
      other choices of parameters is a fine research direction. It would also be good to know where the problems proved to
      be $\mathcal{NP}$-hard in Section 5 reside in $W$-hierarchy. From the proofs of the
      theorems in Section 5 it seems that one will need to first
      consider parametrized complexity of the \textsc{NotAllEqual SAT} problem (and its variants)
      to achieve meaningful results.
  \end{itemize}

  Finally, we note that it is possible to further generalize the notion of $\chi_3$ to get
  the parameter $\chi_r$, for any $r \geq 3$ (notice that $\chi_2=\chi$). This new parameter
  can restrict the coloring in two ways: either not allow monochromatic $K_r$, or not allow
  monochromatic $C_{r'}$ (cycle of length $r'$), for any $3 \leq r' \leq r$. Both extensions are interesting.

  \newpage
  
  \begin{appendices}
    \section{NotAllEqual 3-SAT-4 is NP-complete}\label{apx:nae}

    For the sake of the proof below we define the following: a clause $C$ is NAE-satisfied by assignment $A$, if there exists $l_1 \in C$, such that
    $A(l_1)=true$ and there exists $l_2 \in C$, such that $A(l_2)=false$. A CNF formula $\psi$ is NAE-satisfied by $A$, if all its clauses
    are NAE-satisfied by $A$.
    
    \begin{proof}
      We construct a polynomial reduction from the following $\mathcal{NP}$-complete problem:

      \vspace{\baselineskip}
      
      \noindent \textsc{3-Sat-4}

      \noindent {\em Input:} Boolean formula $\phi$ in conjunctive normal form, where each clause consists of exactly
      three literals and every variable appears exactly four times in the formula.

      \noindent {\em Question:} Does $\phi$ have a satisfying assignment?

      \vspace{\baselineskip}

      \noindent The \textsc{3-Sat-4} problem was proved to be $\mathcal{NP}$-complete by Tovey \cite{tovey}.

      Let $\phi$ be the Boolean formula in CNF, as described above. Let $X$ be the set of variables in $\phi$ and
      let $|X|=n$. Let $C = \{C_1,\dots,C_m\}$ be the set of clauses of $\phi$. 
      We now construct $\phi'$, the instance of \textsc{NotAllEqual 3-Sat-4}.

      The variable set of $\phi'$ is $X \cup \{c_1,\dots,c_m\} \cup \{f_1,\dots,f_m\}$. We add the following clauses
      to $\phi'$:

      \begin{enumerate}
        \item $(x \vee y \vee c_i) \wedge (z \vee \neg c_i \vee f_i)$, for every clause $C_i \equiv (x \vee y \vee z)$, where $1 \leq i \leq m$,
        \item $(\neg f_m \vee \neg f_m \vee f_{1}) \wedge \bigwedge_{i=1}^{m-1} (\neg f_i \vee \neg f_i \vee f_{i+1})$
      \end{enumerate}

      First, observe that in every assignment $A'$ of $\phi'$ and every $1 \leq i,j \leq m$, $A'(f_i)=A'(f_{j})$.
      This property is maintained by clauses from 2. This also implies that clauses from 2. are always NAE-satisfied.

      Assume that $A$ is an assignment satisfying $\phi$. The assignment that NAE-satisfies $\phi'$ is then defined as follows:

      \begin{itemize}
        \item For every $x \in X$, set $A'(x)=A(x)$,
        \item For every $C_i \equiv (x \vee y \vee z)$, where $1 \leq i \leq m$, if $A(x)=A(y)=false$, then set $A'(c_i)=true$, otherwise $A'(c_i)=false$.
        \item For every $1 \leq i \leq m$, set $A'(f_i)=false$.
      \end{itemize}

      \noindent Clearly, assignment $A'$ NAE-satisfies $\phi'$.

      Now assume that $A'$ is some NAE-satisfying assignment of $\phi'$. If in this assignment all $f_i$'s are true,
      then we can negate the assignment of all the variables of $\phi'$, which would still result in a NEA-satisfying assignment.
      Therefore we can assume without loss of generality, that all $f_i$'s are false in $A'$. Take any $C_i \equiv (x \vee y \vee z)$, for 
      $1 \leq i \leq m$. Because of clauses introduced in 1., if $A'(c_i)=true$, then $A'(z)=true$, and if $A'(c_i)=false$, then $A'(x)=true$ or $A'(y)=true$.
      Therefore at least one of the variables from $\{x,y,z\}$ will be true in $A'$.
      Hence, the assignment $A$, where $A(x)=A'(x)$ for all $x \in X$, is a satisfying assignment for $\phi$.

      To complete the proof we note that in $\phi'$, every variable occurs at most 4 times and each clause has exactly 3 literals. Moreover, the
      $\phi'$ can be constructed in polynomial time w.r.t. $n$ and $m$. We also note
      that given an arbitrary assignment, we can check whether it NAE-satisfies a CNF formula, in polynomial time.
    \end{proof}
  \end{appendices}
  
\end{document}